\newcommand{\ra}[1]{\renewcommand{\arraystretch}{#1}}
\newcommand{\algostep}[1]{{\small\texttt{#1\!:}}\xspace}
\newtheorem{theorem}{Theorem}[section]
\newtheorem{lemma}[theorem]{Lemma}
\newtheorem{remark}[theorem]{Remark}
\newtheorem{problem}[theorem]{Problem}
\newtheorem{example}[theorem]{Example}
\newtheorem{corollary}[theorem]{Corollary}
\newtheorem{proposition}[theorem]{Proposition}  
\newcommand{\real}{{\mathbb{R}}}
\newcommand{\integers}{\mathbb{Z}}
\newcommand{\oprocendsymbol}{\hbox{$\bullet$}}
\newcommand{\oprocend}{\relax\ifmmode\else\unskip\hfill\fi\oprocendsymbol}
\newcommand\subscr[2]{#1_{\textup{#2}}}
\newcommand{\Nn}{\mathcal{N}}
\newcommand{\argmax}{\mathrm{argmax}}
\newcommand{\G}{{\mathcal{G}}}
\newcommand{\I}{\mathcal{I}}
\newcommand{\myclearpage}{\clearpage}
 \renewcommand{\myclearpage}{}
\begin{document}

\title{Distributed Submodular Maximization \\ with Limited Information}

\author{Bahman Gharesifard  \qquad  Stephen L. Smith \thanks{Bahman Gharesifard is with the Department of Mathematics and Statistics at Queen's University, Kingston, ON, Canada~\texttt{ghareb@queensu.ca}.  Stephen L.\ Smith is with the Department of Electrical and Computer Engineering at the University of Waterloo, Waterloo, ON, Canada~\texttt{stephen.smith@uwaterloo.ca}.}\thanks{The work of the authors was partially supported by the Natural Sciences and Engineering Research Council of Canada. A preliminary version of this paper appeared in the Proceedings of the $2016$ American Control Conference~\cite{Gharesifard2016Distributed}.}}

\maketitle

\begin{abstract}
We consider a class of distributed submodular maximization problems in which each agent must choose a single strategy from its strategy set.  The global objective is to maximize a submodular function of the strategies chosen by each agent.  When choosing a strategy, each agent has access to only a limited number of other agents' choices.  For each of its strategies, an agent can evaluate its marginal contribution to the global objective given its information. The main objective is to investigate how this limitation of information about the strategies chosen by other agents affects the performance when agents make choices according to a local greedy algorithm. In particular, we provide lower bounds on the performance of greedy algorithms for submodular maximization, which depend on the clique number of a graph that captures the information structure. We also characterize graph-theoretic upper bounds in terms of the chromatic number of the graph. Finally, we demonstrate how certain graph properties limit the performance of the greedy algorithm. Simulations on several common models for random networks demonstrate our results. 
\end{abstract}

\myclearpage

\section{Introduction}\label{section:intro}

There has been a recent surge of activity in developing efficient algorithms for large-scale optimization problems, by partitioning the objective into smaller subtasks, each of which is optimized separately. A well-studied class of such problems is the distributed optimization problem, where the main objective is to optimize a sum of functions, each one available to an individual agent, by means of limited information sharing between individuals~\cite{JNT-DPB-MA:86, MR-RN:04,LX-SB:06, AN-AO:09,PW-MDL:09,AN-AO-PAP:10,BJ-MR-MJ:09,BG-JC:14-tac, AN-AO:15-tac}. Unlike this setting, where strategy sets are often assumed to be continuous sets, here we consider a collaborative scenario, where agents choose a strategy from a private \emph{discrete} strategy set and the goal is to maximize a common objective function defined over the set of strategies chosen by each agent. In this sense, this work can be considered as one possible way to extend the literature on studying distributed optimization with limited information and continuous strategy sets to combinatorial scenarios. We consider situations where the total strategy set is partitioned into subsets, each available only to an individual agent, and the common objective function is additionally assume to be \emph{submodular}.  When choosing a strategy, each agent has access to only a limited number of other agents' choices, and can evaluate its marginal contribution to the global objective given its information. This problem arises in a variety of applications of interest, including problems in distributed sensor coverage~\cite{Marden2017Role, jawaid2015submodularity}, information gathering~\cite{AAA-MIS:04-pipage,jawaid2015informative}, and facility location~\cite{AK-DG:12}.

Building on the early classic work~\cite{GLN-LAW-MLF:78-I, MLF-GLN-LAW:78-II, LL:83, SF:05}, there has been an emerging body of work on sumbodular optimization, mainly due to its wide set of applications to areas of computer science, see for example~\cite{AK-DG:12} and references therein.
It is well known that the submodular maximization problem is NP-hard~\cite{LL:83}, unlike submodular minimization, which can be solved in polynomial time~\cite{LL:83}. However, good approximation algorithms exist; in particular, when the submodular function is normalized and monotone, a simple greedy algorithm, yields a solution that is within a multiplicative factor of $(1-1/\mathrm{e})$ of the optimal. The distributed problem we consider in this paper can be captured as a partition matroid constraint; in particular, there is a bound on the cardinality of the intersection of the solution set and the strategy set of each individual agent. It has been shown~\cite{GLN-LAW-MLF:78-I} that maximizing a normalized and monotone submodular function subject to a matroid constraint can be approximated to within $1/2$ using the simple greedy algorithm. More recently~\cite{calinescu2011maximizing} a randomized algorithm has been proposed for this problem, which yields an approximation of $(1 - 1/\mathrm{e})$. 

The key feature that differentiates the problem under study in this paper from the classic setting of submodular maximization is the limitations on the information available to a decision maker about the strategies chosen by others. Similar to the literature on distributed optimization, in order to achieve any reasonable performance, agents need to share information about their actions. The main objective of this paper is to investigate how much information is required to guarantee a certain performance, and what limitations are put on the performance by the information structure between agents. 
 
It is worth mentioning that parallel computations in submodular optimization has recently generated a lot of interest~\cite{BM-AK-RS-AK:13,BM-AK-RS-AK:14,SP:13-thesis,clark2015scalable, barbosa2016new}.   These works focus primarily on large-scale maximization or minimization problems, solving them by partitioning the problem into subsets, each of which is optimized separately.  The overall solution is aggregated and refined through additional computations by a central node, often in a MapReduce framework.
In our work, however, we envision scenarios where the limitations imposed on individual agents is physical (for example, being able to only estimate data in a neighbourhood). This paper is also somewhat related to the recent work~\cite{Marden2017Role}, where the role of limitation of information in submodular optimization is studied in the context of coverage problem, for the cases where agents have full information or no information. Here, we address the limitations imposed by the information network topology. Finally, part of our work is related to the so-called ``local greedy algorithms'', which is studied in the classical paper~\cite{MLF-GLN-LAW:78-II}, as we describe in details later.

\paragraph*{Statement of Contributions}
We consider a class of distributed submodular maximization problems in which each agent has a strategy set from which it must choose a single strategy.  The global objective is to maximize a submodular function of the set of strategies chosen by each agent.  The group of agents take decisions sequentially, having available to them only limited information about the strategies chosen by previous agents. Each agent can evaluate its marginal contribution to the global objective given its information.  The information structure is cast as a directed acyclic graph (DAG) and our main objective is to characterize the fundamental limitations that information limitations impose on the performance of local greedy algorithms. We first show that the well-known $ \tfrac{1}{2} $ lower bound on the performance of greedy algorithms for submodular maximization can be obtained when the information structure is a complete DAG. We provide a general lower bound on the performance, which depends on the clique number of the graph, and also provide lower bounds for graph topologies with multiple interconnected cliques. We then characterize graph-theoretic upper bounds on the performance of the local greedy algorithm. We tackle this objective by considering two problem statements. In the first one, we characterize fundamental limitations on the performance of the greedy algorithms on a given graph topology in terms of its chromatic number. Our second result demonstrates how achieving a certain guaranteed performance imposes limitations on the topology of the underlying graph. We characterize the gap between these two bounds, and show how they can be used to efficiently compute limitations of distributed greedy algorithms for scenarios with limited information. Case studies on several types of randomly generated graphs demonstrate the relationship between the performance bounds provided and the true performance of the distributed greedy algorithm.  

A preliminary version of this paper appeared as~\cite{Gharesifard2016Distributed}.  Relative to this early work, key contributions include proofs of all results; a treatment of intersecting strategy sets and synchronous agent updates; applications to sensor coverage; and, extensive simulation results for several random graph models.

\section{Preliminaries}\label{sec:prelim}

Many combinatorial optimization problems can be phrased as a \emph{submodular maximization problem}, which can be stated as follows. Consider a base set of elements $E$, and let $2^E$ be the power set of $E$, containing all of its subsets. Then a function $f:2^E \to \real_{\geq 0}$ is \emph{submodular} if it possesses the property of diminishing returns: For all $A \subseteq B \subseteq S$, and for all $x \in S\setminus B$ we have
\[
f(A\cup\{x\}) - f(A) \geq f(B\cup\{x\}) - f(B).  
\]
We refer to $f(A\cup\{x\}) - f(A)$ as the \emph{marginal reward} of $x$ given $A$, and denote
it by $\Delta(x|A)$. This definition is equivalent to stating that for all $X,Y \subseteq S$, we have that 
\begin{equation}\label{eq:submod-equi}
f(X)+f(Y) \geq f(X\cup Y)+f(X\cap Y).  
\end{equation}

In addition to submodularity, we will consider functions that possess two further properties:
\begin{enumerate}
\item \emph{Monotonicity:}  For all $A \subseteq B \subseteq E$, we have
  $f(B) \geq f(A)$; and
\item \emph{Normalization:}  $f(\emptyset) = 0$.  
\end{enumerate}

Given a normalized submodular function $f$ defined over a base set $E$ containing $|E| = n$ elements, along with a positive integer $k <n$, the submodular maximization problem can be stated as
\begin{align}
\label{eq:submod_objective}
\max_{S \subset E} \;\;& f(S) \\
\label{eq:submod_constraint}
\text{subject to} \quad &|S| \leq k.
\end{align}

It is well known that this optimization problem is NP-hard. It contains as a special case
several well known combinatorial optimization problems including \textsc{Max Cut} and
\textsc{Facility Location}. Even when the function is monotone, the problem is still NP-hard,
and contains as a special case \textsc{Max $k$-Cover}.

When the submodular function is normalized and monotone,  a simple
greedy algorithm, in which a solution is built incrementally by adding the element $x$ to $S$
that maximizes $\Delta(x|S)$ yields a solution that is within a multiplicative factor of
$(1-1/\mathrm{e}) \approx 0.63$ of the optimal. In addition, it is shown
in~\cite{UF:98} that \textsc{Max $k$-Cover} cannot be approximated to a factor
better than $(1-1/\mathrm{e})$, unless P $=$ NP. This implies that the greedy algorithm provides the best possible approximation.

In place of the constraint~\eqref{eq:submod_constraint}, one can use a more general form of
constraint known as a matroid. The constraint in~\eqref{eq:submod_constraint} defines the \emph{uniform
matroid} in which $S$ must belong to $\{A \subseteq E \; :\; |A| \leq k\}$. Another commonly used matroid constraint in optimization is the \emph{partition matroid} defined as follows:  Let $E$ be a base set of elements, and partition $E$ into $n$ disjoint sets $E_1,\ldots,E_n$ and let $k_1,\ldots,k_n$ be positive integers. Then allowable sets are members of $\{A \subseteq E \; :\; |A \cap E_i| \leq k_i\}$.  It has been shown~\cite{GLN-LAW-MLF:78-I} that maximizing a normalized and monotone submodular function subject to a matroid constraint can be approximated to within $1/2$ using the simple greedy algorithm.

\section{Problem Statement}\label{sec:problem_statement}

We begin by introducing a centralized version of the submodular maximization problem studied in this paper. 

Consider the following submodular maximization problem. We are given $ n $ disjoint sets $ X_i $, $ i \in \{1,\ldots, n\} $,  $X = \cup_i X_i$, and a submodular monotone and normalized function $f:2^X \to \real_{\geq 0}$.  We wish to solve
\begin{align}
	\label{eq:matroid-1}
  &\max_{S \subseteq X} f(S) \nonumber \\
  & \text{subject to} \nonumber \\
  & |S\cap X_i| \leq 1 \quad \text{for each $i\in\{1,\ldots,n\}$}.
\end{align}
This is a submodular maximization problem over a partition matroid, and thus from Section~\ref{sec:prelim}, the optimal solution can be approximated to
within a multiplicative factor of $1/2$ using the simple greedy algorithm~\cite{GLN-LAW-MLF:78-I}. 

\subsection{Distributed Optimization Problem}

We consider a \emph{collaborative} submodular analogue of~\eqref{eq:matroid-1}. 

\begin{problem}[Distributed submodular maximization]\label{problem}
 We are given $n$ agents, or players $ V=\{1,\ldots, n\} $. Each agent has an \emph{strategy set} $X_i$, and must choose one strategy (or action) $x_i \in X_i$.  We define $X = \cup_i X_i$ and are given a normalized and monotone increasing submodular function $f: 2^X \to \real_{\geq 0}$.  The goal is for the agents to each choose an $x_i \in X_i$ such that $f(x_1,\ldots,x_n)$ is maximized.
\end{problem}

We consider the case where agents choose their strategies in a sequential manner, starting with agent $1$ and ending with agent $n$.  Each agent $i\in V$ has access to a subset of the strategies that have been chosen  by agents $\{1,\ldots,i-1\}$.  This information is encoded in a directed acyclic graph (DAG) $\G = (V,E)$ where all edges $(i,j) \in E$ satisfy $i < j$.  Note that every DAG can be topologically sorted, and thus there always exists a labeling of the vertices of $\G$ in which $i < j$ for all $(i,j) \in E$.  A \emph{complete DAG} is a DAG for which no edge can be added without creating a cycle. A \emph{clique} in $\G $ is a subgraph of $ \G $ that is complete.   The \emph{clique number} of a DAG $\G$ is the number of vertices in its largest clique, and is denoted $\omega(\G)$.

We define the in-neighbors of agent $i$ as
\[
\Nn(i) = \{j\in V \;|\; (j,i) \in E\},
\]
and thus the information available to agent $i$ when choosing its strategy is
\[
X_{\mathrm{in}}(i) = \{x_j \; |\; j \in \Nn(i)\}. 
\]

In this paper we study the performance of the greedy algorithm in which agent $i$ chooses its strategy $x_i$ to maximize its marginal reward relative to its limited information $X_{\mathrm{in}}(i)$:
\begin{equation}
	\label{eq:seq_greedy}
x_i = \argmax_{x\in X_i}\Delta\big(x\;|\; X_{\mathrm{in}}(i)\big).
\end{equation}
In the case where multiple strategies have equal marginal reward, the greedy algorithm selects one maximizer arbitrarily.  

Given a set of strategies $\{x_1,\ldots,x_n\}$ for the $n$ agents, the overall objective can be written as a sum of marginal rewards as
\begin{equation}
	\label{eq:submod_value}
f(x_1,\ldots,x_n) = \sum_{i = 1}^n \Delta(x_i \;| \; \{x_1,\ldots,x_{i-1}\}),
\end{equation}
where $\{x_1,\ldots,x_0\}$ is defined to be the empty set. It is worth emphasizing that individual agents do not require access to the function $ f $ given in~\eqref{eq:submod_value}.  The require only the ability to calculate their own marginal rewards using the limited information they have access to in order to compute~\eqref{eq:seq_greedy}. Note that while $x_i$ that maximizes agent $ i $'s marginal reward relative to $X_{\mathrm{in}} \subseteq \{x_1,\ldots,x_{i-1}\}$, its contribution to the overall objective value is given by the marginal reward relative to  $\{x_1,\ldots,x_{i-1}\}$. The main goal of this paper is to investigate how this lack of information affects the performance of the greedy algorithm.

\begin{remark}[Topological ordering]
	In general, there are multiple topological orderings of a DAG $\G = (V, E)$, each defining a different sequence in which the agents select their strategies.  However, the strategy chosen by each agent will be identical in all valid orderings, since each agent $i$ must select its strategy after all agents in $X_{\mathrm{in}}(i)$. \oprocend
\end{remark}

\begin{example}[Sensor coverage]
\label{ex:sensor_coverage}
As an illustrative example, consider the setting shown in the left panel of  Figure~\ref{fig:disk_selection}.  
\begin{figure*}[tbh]
	\includegraphics[width=0.32\linewidth]{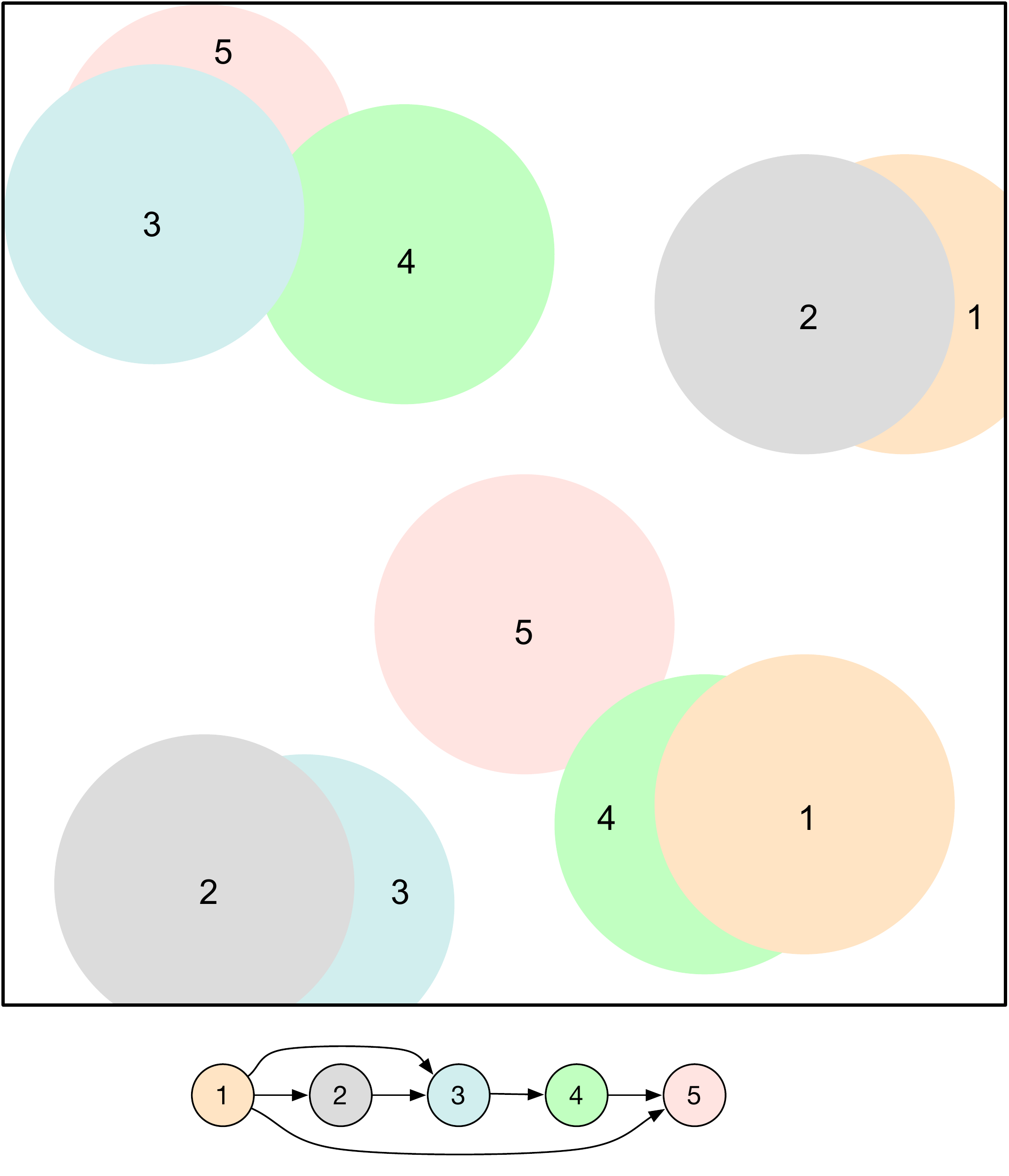} \hfill
	\includegraphics[width=0.32\linewidth]{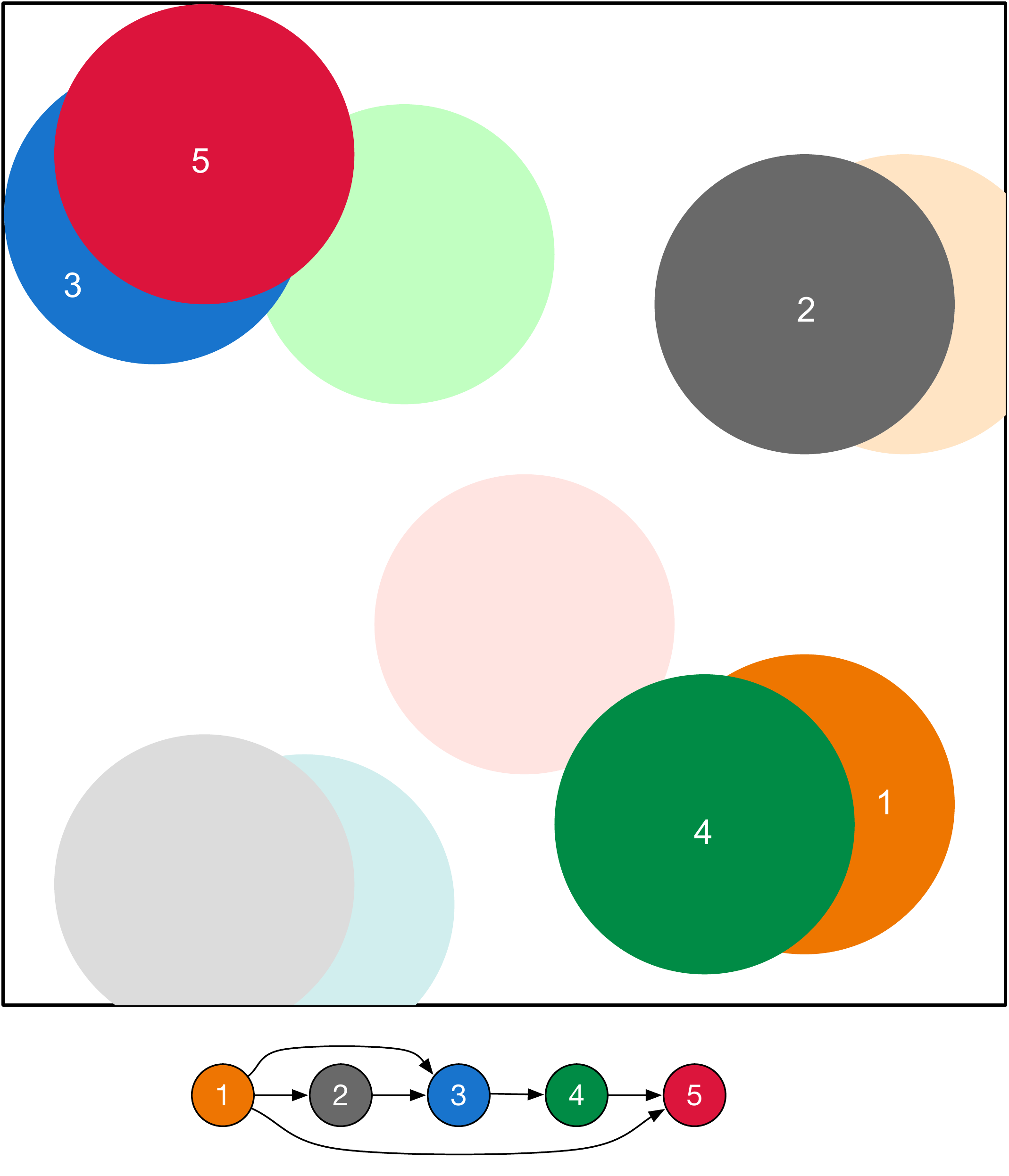} \hfill
	\includegraphics[width=0.32\linewidth]{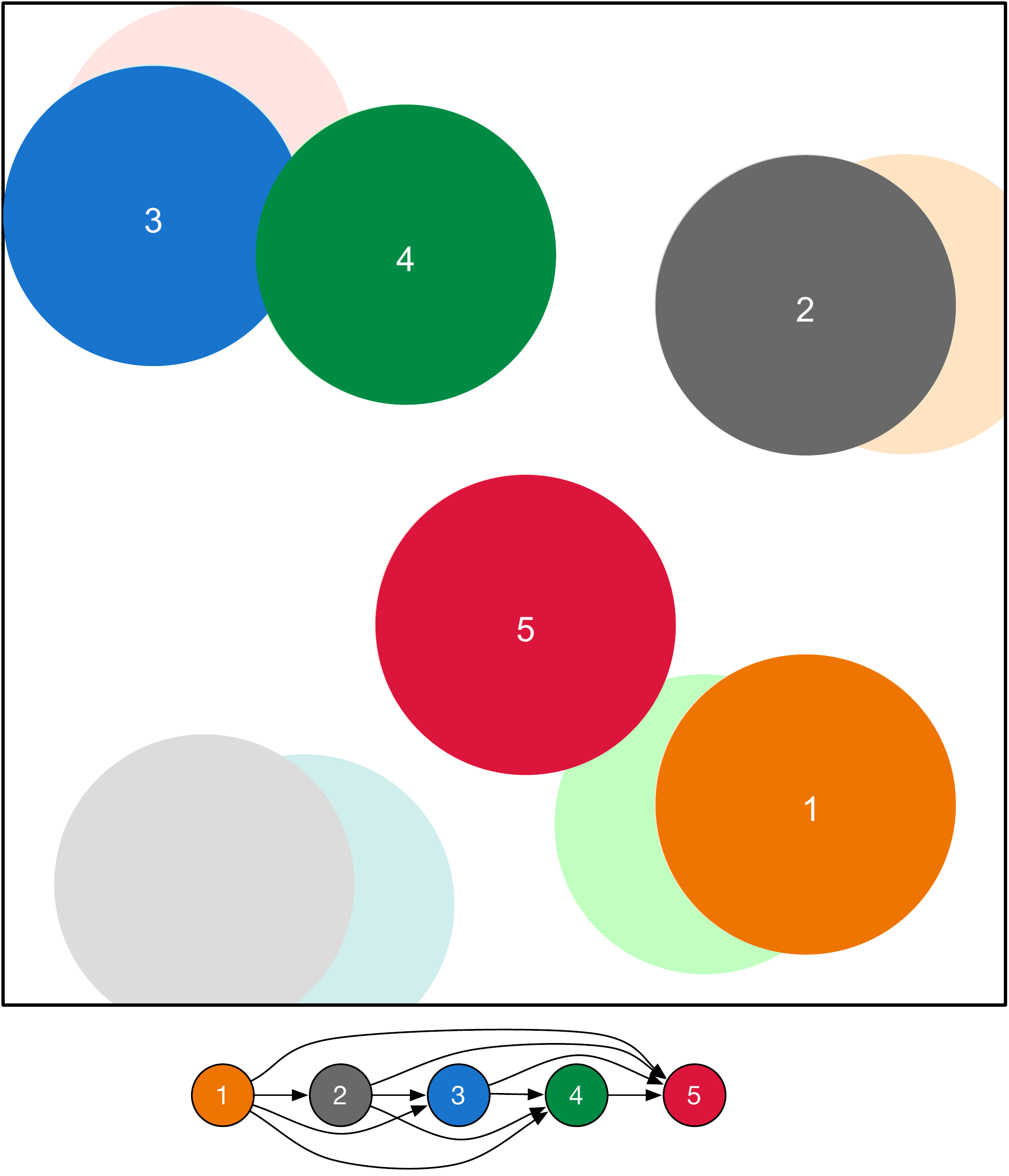}
	\centering
	\caption{Left:  Five agents, each with the choice of two disks. The objective function is the total area covered by the union of disks. Center: The dark numbered disks show the choice of each agent under limited information.  Right:  The dark numbered disks show the choice of each agent given complete information of the choices of prior agents.}
	\label{fig:disk_selection}
\end{figure*}
There are five agents, and each agent $i\in\{1,\ldots, 5\} $ has a strategy set $X_i$ consisting of two elements, which in this example are disks.  Each agent must choose one of its two disks, and thus as an example, agent $2$ must choose one of the two grey disks labeled with the number 2.  The submodular function $f$ gives the total area covered by the union of the disks chosen by each agent.  The information structure $\G$ is shown in the DAG below the panel.  The center panel shows the disk chosen by each agent under the greedy algorithm \eqref{eq:seq_greedy}.  The right panel shows the disks chosen when each agent has complete information of the choices made before it (i.e., when $\G$ is a complete DAG).   We can see that significantly more area is covered when the agents have complete information.  \oprocend 
\end{example}

\begin{remark}[Synchronous agent updates]
	\label{rem:synchronous}
	An alternative model is that each agent $i\in V$ synchronously updates its strategy based on the strategies of its in-neighbors.  That is, let $x_i(t)$ be the strategy of agent $i$ at the synchronous iteration $t$.  Then, agent $i$ greedily chooses $x_i(t)$ to maximize its marginal reward relative to its in-neighbors choices at $t-1$:
	\[
	x_i(t) = \argmax_{x\in X_i}\Delta\big(x\;|\; X_{\mathrm{in}}(i,t-1)\big),
	\] 
	where $X_{\mathrm{in}}(i,t-1) = \{x_j(t-1) \; |\; j \in \Nn(i)\}$.  In this case, it is straightforward to verify that agent $k$'s strategy is fixed after iteration $k$, and its choice of strategy is equal to that in the sequential setup.  Thus, the performance of the synchronous update is identical to that the sequential update. \oprocend
\end{remark}

\subsection{Disjoint vs.\ Intersecting Strategy Sets}
\label{sec:disjoint}

The main results in this paper are bounds on the performance of the greedy algorithm as functions of the information topology $\G$ and the number of agents $n$.  In this section we show that without loss of generality we can restrict our attention to disjoint strategy sets.  A reader can skip directly to Section~\ref{sec:lower_bds} and revisit the formal argument later.

Consider Problem~\ref{problem} with information structure $\G$ and strategy sets $X_1,\ldots,X_n\subseteq X$ that are possibly intersecting.  If the agents run the greedy algorithm in~\eqref{eq:seq_greedy}, then the solution $\bar S = (x_1,\ldots,x_n)$ produced (written as a tuple to keep track of the choice of each agent) has two notable properties: It may contain multiple and/or repeated elements from a single strategy set.  The value of $\bar S$ is evaluated by defining the set $S$ containing the unique elements of $\bar S$, i.e., $S = \{x\in X\;|\; \bar S(i) = x \text{ for some } i\in\{1,\ldots,n\}\}$, and evaluating $f(S)$.  As a simple example, if two agents pick the same strategy $x$, resulting in the solution $\bar S = (x,x)$, then $S = \{x\}$ and the value of the solution is $f(\{x\})$.

Now, we define a corresponding problem with the same number of agents $n$ and information topology $\G$, but with a submodular function $f'$ and disjoint strategy sets $X_1',\ldots,X_n'$ in $X' = \cup_i X_i'$.  Define $X_i'$ as a set containing a unique copy of each element in $X_i$.  We then define a function $g:2^{X'} \to 2^X$ that for a set of elements $A\subseteq X'$, returns the corresponding set of elements in $X$.  Formally, for any set $A \subseteq X'$, we have
\[
g(A) = \{x\in X \;|\;g(a) = x \text{ for some } a\in A\}. 
\]
Note that $g(X_i') = X_i$.  By the definition of $g$, we immediately see that for any $A,B\subseteq X'$, 
\begin{align*}
g(A) \cup g(B) = g(A \cup B), \text{ and } 
g(A) \cap g(B) = g(A \cap B).
\end{align*}

Next, we define the function $f' : 2^{X'} \to \real_{\geq 0}$ as $f' = f\circ g$ so that for any $A\in X'$,
\[
f'(A) = f(g(A)).
\]
To see that the function is submodular, note that for any $A,B\subset X'$, we have
\begin{align*}
f'(A) + f'(B) &= f(g(A)) +f(g(B)) \\ &\leq f(g(A) \cup g(B)) + f(g(A) \cap g(B))\\
&= f(g(A \cup B)) + f(g(A\cap B)) \\
&= f'(A \cup B) + f'(A\cap B),
\end{align*}
where the inequality holds by submodularity of~$f$.  If $f$ is monotone and normalized, then so is $f'$.

Suppose the greedy algorithm~\eqref{eq:seq_greedy} constructs a solution  $S = \{x_1,\ldots,x_n\}$ on the non-disjoint instance $f$.   Consider any $S'=\{x_1',\ldots,x_n'\}$ such that $x_i' \in X_i'$ and $g(x_i) = x_i'$.  What remains is to show that $S'$ is a solution of the greedy algorithm~\eqref{eq:seq_greedy} when run on the disjoint instance $f'$.  Also, since $g(S') = S$ the solutions have equal value, $f'(S') = f(S)$.

By recalling the definition of $X_{\mathrm{in}}(i)$ and defining $X_{\mathrm{in}}'(i) = \{x_j'\;|\; j\in \mathcal{N}(i)\}$, we have
\[
X_{\mathrm{in}}(i) = g(X_{\mathrm{in}}'(i)).
\]
Now, we need to verify that $x_i'$ satisfies
\begin{equation}
\label{eq:xprime_greedy}
x_i' = \argmax_{x'\in X_i'}\Delta'\big(x'\;|\; X_{\mathrm{in}}'(i)\big),
\end{equation}
where $\Delta'$ is the marginal reward for $f'$.  From the definition that $f'(A) = f(g(A))$, we have that
\[
\Delta'\big(x_i'\;|\;X_{\mathrm{in}}'(i) \big) = \Delta\big(x_i\;|\;X_{\mathrm{in}}(i) \big).
\]

Suppose now, by way of contradiction, that $x_i'$ is not a maximizer of~\eqref{eq:xprime_greedy} and there exists a $y'\in X_i'$ such that
\[
\Delta'\big(y'\;|\;X_{\mathrm{in}}'(i) \big) > \Delta'\big(x_i'\;|\;X_{\mathrm{in}}'(i) \big).
\]
This would imply that
$\Delta\big(g(y')\;|\;X_{\mathrm{in}}(i) \big) > \Delta\big(x_i\;|\;X_{\mathrm{in}}(i) \big)$,
and since $g(y') \in X_i$, we conclude that $x_i$ is not a valid choice of the greedy algorithm on $f$, which is a contradiction.  Thus, for every intersecting instance, there is a corresponding disjoint instance on the same information structure, with the same optimal solution value, and such that the greedy algorithm produces solutions with equal value.

\section{The sequential distributed greedy algorithm: lower bound}\label{sec:distributedgreedy}
\label{sec:lower_bds}

Throughout this section, we assume that the strategy sets are disjoint. The agents take their decisions sequentially in increasing order according to their index.  We start with a scenario where the agents \emph{do not observe} the strategies of the agents that have taken action prior to them (see Figure~\ref{fig:seq-obs}(a)). The decision of agent $ i \in V $ is then
\[
x_{i}=\argmax_{x\in X_{i}} \Delta(x | \emptyset).
\]
Suppose that $ \{x_1^*,x_2^*, \ldots, x_{n}^*\} $ be the solution 
of~\eqref{eq:matroid-1}, where $ x^*_i \in X_i $ for each $ i\in \{1,\ldots,n\} $. We have that
\[
f(x_1,\ldots, x_{n}) \geq f(x_i) \geq f(x_i^*),
\]
for all $ i \in \{1,\ldots, n\} $, and hence 
\[
f(x_1,\ldots, x_n) \geq \frac{1}{n} \sum_{i=1}^n f(x^*_i) \geq \frac{1}{n} f(x_1^*,\ldots, x_{n}^*),
\]
where the last inequality follows from submodularity of $ f $ given by~\eqref{eq:submod-equi}.
It is easy to observe that this lower bound is tight and cannot be improved. A natural problem is hence to investigate if this lower bound on the performance can be improved when the agents can observe (perhaps partially) the decisions of the preceding agents before making decisions. Consider the scenario where agent $ i $ observes the decision of all agents in the set $ \I_i=\{1,\ldots, i-1\} $. The information that each agent has access to before taking its decision is best represented by Figure~\ref{fig:seq-obs}(b). In this case, the decision for agent $i$ from~\eqref{eq:seq_greedy} becomes
\begin{equation}\label{eq:asyn-greedy-obs}
x_{i}=\argmax_{x\in X_{i}} \Delta(x | \{x_1,\ldots, x_{i-1}\}).
\end{equation}

\begin{figure}[tbh!]
\centering 
      \subfigure[]{\includegraphics[width=.47\linewidth]{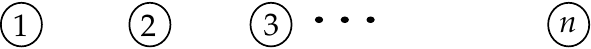}} \quad 
        \subfigure[]{\includegraphics[width=.47\linewidth]{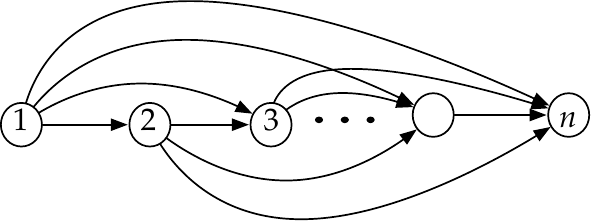}}
\caption{\footnotesize (a)~shows a scenario with no observation and~(b) shows a scenario where each agents observes the decisions of all the preceding agents.}\label{fig:seq-obs}
\end{figure}

The performance of the strategy proposed by~\eqref{eq:asyn-greedy-obs} can be deduced from the so-called ``local greedy algorithms'', which is studied in the classical paper~\cite[Theorem~4.1]{MLF-GLN-LAW:78-II}.  The proof presented in~\cite{MLF-GLN-LAW:78-II} relies on a clever use of linear programming. Here we present an independent proof that allows us to extend this result to more general information topologies. We start with the following result.

\begin{lemma}\label{lemma:inductive-lemma}
Consider Problem~\ref{problem} and let $ (x_1^*,x_2^*, \ldots, x_{n}^*) $ be an optimal solution. Suppose that players sequentially update their strategies $(x_1,\ldots,x_n)$ according to~\eqref{eq:asyn-greedy-obs}. Then, for all $ 1 \leq k \leq n $, we have that
\begin{equation}\label{eq:inductive-lemma}
f(x_1,\ldots, x_k)\geq f(x_1^*,\ldots, x_k^*)-f(x_1,\ldots, x_{k-1}).
\end{equation}
\end{lemma}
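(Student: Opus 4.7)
The plan is to prove the inequality by combining the greedy selection rule with two consequences of submodularity: monotonicity and the diminishing-returns property of marginals. I would not proceed by induction on $k$; instead I would peel off the optimal elements one at a time against the greedy set $\{x_1,\ldots,x_{k-1}\}$ and compare term-by-term with the greedy marginals.

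\medskip

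\noindent\textbf{Step 1 (upper bound on the optimum).} First I would use monotonicity of $f$ to write
\[
f(x_1^*,\ldots,x_k^*) \;\leq\; f\bigl(\{x_1,\ldots,x_{k-1}\} \cup \{x_1^*,\ldots,x_k^*\}\bigr),
\]
and then repeatedly apply submodularity (in its diminishing-returns form) to expand the right-hand side as a telescoping sum of marginals:
\[
f\bigl(\{x_1,\ldots,x_{k-1}\}\cup\{x_1^*,\ldots,x_k^*\}\bigr)
\;\leq\; f(x_1,\ldots,x_{k-1}) + \sum_{j=1}^k \Delta\bigl(x_j^*\,|\,\{x_1,\ldots,x_{k-1}\}\bigr).
\]

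\noindent\textbf{Step 2 (bounding each optimal marginal by a greedy marginal).} For each $j \in \{1,\ldots,k\}$, I would chain two inequalities. By submodularity, since $\{x_1,\ldots,x_{j-1}\}\subseteq\{x_1,\ldots,x_{k-1}\}$, the marginal contribution of $x_j^*$ can only shrink as the conditioning set grows:
\[
\Delta\bigl(x_j^*\,|\,\{x_1,\ldots,x_{k-1}\}\bigr) \;\leq\; \Delta\bigl(x_j^*\,|\,\{x_1,\ldots,x_{j-1}\}\bigr).
\]
Then, because $x_j^* \in X_j$ is a feasible candidate at step $j$ and $x_j$ is the greedy maximizer relative to $\{x_1,\ldots,x_{j-1}\}$, we have
\[
\Delta\bigl(x_j^*\,|\,\{x_1,\ldots,x_{j-1}\}\bigr) \;\leq\; \Delta\bigl(x_j\,|\,\{x_1,\ldots,x_{j-1}\}\bigr).
\]

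\noindent\textbf{Step 3 (assemble).} Summing the chained inequality over $j = 1,\ldots,k$ and using the telescoping identity~\eqref{eq:submod_value},
\[
\sum_{j=1}^k \Delta\bigl(x_j\,|\,\{x_1,\ldots,x_{j-1}\}\bigr) \;=\; f(x_1,\ldots,x_k),
\]
I would obtain
\[
f(x_1^*,\ldots,x_k^*) \;\leq\; f(x_1,\ldots,x_{k-1}) + f(x_1,\ldots,x_k),
\]
which, after rearrangement, is exactly~\eqref{eq:inductive-lemma}.

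\medskip

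The only subtle step is Step 2: one has to remember to apply the greedy inequality at the \emph{correct} conditioning set $\{x_1,\ldots,x_{j-1}\}$ (the one actually seen by agent $j$ in~\eqref{eq:asyn-greedy-obs}) and then separately invoke diminishing returns to pass from that set up to the common conditioning set $\{x_1,\ldots,x_{k-1}\}$ used in Step 1. Disjointness of the strategy sets is used implicitly to guarantee that $x_j^*$ is a legal candidate for agent $j$ (it belongs to $X_j$ and nowhere else).
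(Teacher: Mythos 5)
Your proof is correct, but it follows a genuinely different route from the paper's. The paper proves the lemma by induction on $k$, establishing the slightly stronger statement $f(x_1,\ldots,x_{k+1}) \geq f(x_1,\ldots,x_k,x_1^*,\ldots,x_{k+1}^*) - f(x_1,\ldots,x_k)$ and, at each inductive step, invoking the greedy choice property once together with the set-form submodularity inequality~\eqref{eq:submod-equi} applied to the two sets $\{x_1,\ldots,x_{k-1},x_1^*,\ldots,x_k^*\}$ and $\{x_1,\ldots,x_k,x_{k+1}^*\}$. You instead give a direct, non-inductive argument in the classical Nemhauser--Wolsey--Fisher style: bound $f(x_1^*,\ldots,x_k^*)$ by $f\bigl(\{x_1,\ldots,x_{k-1}\}\cup\{x_1^*,\ldots,x_k^*\}\bigr)$, telescope the optimal elements against the common conditioning set $\{x_1,\ldots,x_{k-1}\}$ via diminishing returns, and then apply the greedy inequality at each step $j$ at its own conditioning set $\{x_1,\ldots,x_{j-1}\}$. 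In effect you prove the same strengthened inequality the paper obtains by induction, but in one pass; your version makes the role of each agent's greedy choice explicit and is arguably more transparent, while the paper's inductive argument needs only the pairwise inequality~\eqref{eq:submod-equi} and the single stated greedy choice property per step. One minor point worth making explicit: when $x_j^*$ happens to already lie in $\{x_1,\ldots,x_{k-1}\}$ (possible only if $x_j^*=x_j$ with $j\le k-1$, by disjointness), the diminishing-returns inequality in your Step 2 degenerates to $0 \leq \Delta\bigl(x_j^*\,|\,\{x_1,\ldots,x_{j-1}\}\bigr)$, which holds by monotonicity; this is trivial but should be noted since the paper's definition of diminishing returns is stated only for elements outside the larger set.
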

\begin{proof}
We prove the following slightly stronger result, 
\begin{align*}
f(x_1, \ldots, x_{k+1}) &\geq f(x_1,\ldots, x_k,x^*_1,\ldots, x^*_{k+1})\\
& - f(x_1,\ldots, x_k).
\end{align*}
Note that the result in~\eqref{eq:inductive-lemma} immediately follows after proving this, since 
\[
f(x_1,\ldots, x_k,x^*_1,\ldots, x^*_{k+1}) \geq f(x^*_1,\ldots, x^*_{k+1}),
\] 
by monotonicity.  The proof will proceed by induction. Throughout the proof, we use  the ``greedy choice property'', i.e., the fact that 
\[
\Delta(x_{k+1}|x_1,\ldots, x_k) \geq \Delta(x^*_{k+1}|x_1,\ldots, x_k). 
\]
First, consider the base case of $k=0$.  By the greedy choice property, we have that
\[
f(x_1) \geq f(x_1^*),
\]
and the the base case holds.  Next, suppose the result holds for $k$ and we show it holds for $k+1$. We have
\begin{align}
f(x_1,\ldots, &x_{k+1})  \nonumber\\
=& f(x_1,\ldots, x_k) + \Delta(x_{k+1}|x_1,\ldots, x_k) \nonumber \\
\geq & f(x_1,\ldots, x_{k-1},x^*_1,\ldots, x^*_{k}) - f(x_1,\ldots, x_{k-1})\nonumber \\
& + \Delta(x_{k+1}^*|x_1,\ldots, x_{k}) \nonumber\\ 
=& f(x_1,\ldots, x_{k-1},x^*_1,\ldots, x^*_{k}) - f(x_1,\ldots, x_{k-1}) \nonumber\\
&+ 	f(x_1,\ldots, x_{k},x_{k+1}^*) - f(x_1,\ldots, x_{k}),
\label{eq:induction_step}
\end{align}
where in the first inequality we applied the inductive hypothesis and the greedy choice property.  Next, using submodularity, we have
\begin{align}
	\label{eq:submod_combine}
f(x_1,\ldots,& x_{k-1},x^*_1,\ldots, x^*_{k}) + f(x_1,\ldots, x_{k},x_{k+1}^*) \\
& \geq f(x_1,\ldots, x_{k},x^*_1,\ldots, x^*_{k+1})\nonumber
 + f(x_1,\ldots, x_{k-1}).
\end{align}
Substituting~\eqref{eq:submod_combine} into the right hand side of~\eqref{eq:induction_step} and cancelling the common $f(x_1,\ldots, x_{k-1})$ term we get
\begin{align*}
f(x_1,\ldots, x_{k+1}) &\geq f(x_1,\ldots, x_{k},x^*_1,\ldots, x^*_{k+1}) \\
&- f(x_1,\ldots, x_{k}),
\end{align*}
which proves the result.
\end{proof}

\begin{theorem}\label{theorem:main-full-one-half}
Consider Problem~\ref{problem} and let $ (x_1^*,x_2^*, \ldots, x_{n}^*) $ be an optimal solution. Suppose that players sequentially update their strategies $(x_1,\ldots,x_n)$ according to~\eqref{eq:asyn-greedy-obs}. Then
\[
f(x_1,\ldots, x_n)\geq \frac{1}{2} f(x^*_1,\ldots, x^*_n).
\]
\end{theorem}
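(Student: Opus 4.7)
The plan is to derive the theorem as a direct corollary of Lemma~\ref{lemma:inductive-lemma} by instantiating it at $k = n$ and then exploiting monotonicity to absorb the subtracted term on the right-hand side into the left-hand side. Concretely, I would first write down the conclusion of the lemma at the final step, namely
\[
f(x_1,\ldots, x_n) \geq f(x_1^*,\ldots, x_n^*) - f(x_1,\ldots, x_{n-1}).
\]
The inductive lemma has already done essentially all of the heavy lifting: it has converted the greedy choice property and submodularity into a statement that relates the greedy value at stage $n$ to the optimal value minus the greedy value at the \emph{previous} stage.

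Next, I would observe that monotonicity of $f$ gives the trivial bound $f(x_1,\ldots, x_{n-1}) \leq f(x_1,\ldots, x_n)$, since adding the element $x_n$ can only increase the value. Substituting this into the displayed inequality yields
\[
f(x_1,\ldots, x_n) \geq f(x_1^*,\ldots, x_n^*) - f(x_1,\ldots, x_n),
\]
and rearranging gives $2\, f(x_1,\ldots,x_n) \geq f(x_1^*,\ldots,x_n^*)$, which is exactly the claimed $\tfrac{1}{2}$ bound.

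Since Lemma~\ref{lemma:inductive-lemma} has already encapsulated the induction, the greedy choice property, and submodularity, there is no substantive remaining obstacle; the proof is essentially a one-line application of the lemma combined with monotonicity. The only conceptual point worth flagging is that the subtracted term $f(x_1,\ldots,x_{n-1})$ in the lemma is precisely what allows the $\tfrac{1}{2}$ (rather than a weaker constant) to emerge after the monotonicity bound, so it is worth stating this step explicitly in the write-up even though it is short.
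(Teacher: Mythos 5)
Your proposal is correct and matches the paper's own proof essentially verbatim: both apply Lemma~\ref{lemma:inductive-lemma} at $k=n$ and then use monotonicity in the form $f(x_1,\ldots,x_{n-1}) \leq f(x_1,\ldots,x_n)$ to rearrange into the $\tfrac{1}{2}$ bound. No gaps.
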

\begin{proof}
Using Lemma~\ref{lemma:inductive-lemma}, we have
\[
f(x_1,\ldots, x_n)\geq f(x_1^*,\ldots, x_n^*)-f(x_1,\ldots, x_{n-1}).
\]
Also, by monotonicity, we have $f(x_1,\ldots, x_n)\geq f(x_1,\ldots, x_{n-1})$, and hence the result follows. 
\end{proof}

The following consequence of this result holds for any general directed acyclic graph.

\begin{corollary}
	\label{cor:clique_bound}
Consider Problem~\ref{problem} and let $ (x_1^*,x_2^*, \ldots, x_{n}^*) $ be an optimal solution. Suppose that players sequentially update their strategies $(x_1,\ldots,x_n)$ according to~\eqref{eq:seq_greedy} on a directed acyclic graph $\G$. Then,
\[
f(x_1,\ldots, x_n)\geq \frac{1}{(n-\omega(\G))+2} f(x^*_1,\ldots, x^*_n),
\]
where $\omega(\G)$ is the clique number of $\G$.
\end{corollary}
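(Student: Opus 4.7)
The plan is to upper-bound the optimum $f(x_1^*,\ldots,x_n^*)$ by the greedy value $f(x_1,\ldots,x_n)$ plus a sum of overcount marginals, and then to exploit a maximum clique of $\G$ to collapse those marginals.

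First I would fix a maximum clique $C = \{c_1,\ldots,c_\omega\}$ of $\G$ with $c_1 < c_2 < \cdots < c_\omega$ in the topological order (this linear ordering is forced since a clique of a DAG is totally ordered). By monotonicity, $f(x_1^*,\ldots,x_n^*) \leq f(x_1,\ldots,x_n,x_1^*,\ldots,x_n^*)$, and telescoping the second block of arguments yields
\[
f(x_1^*,\ldots,x_n^*) \leq f(x_1,\ldots,x_n) + \sum_{i=1}^n \Delta\bigl(x_i^* \mid \{x_1,\ldots,x_n\}\cup\{x_1^*,\ldots,x_{i-1}^*\}\bigr).
\]
Each summand can then be bounded by $\Delta(x_i \mid X_{\mathrm{in}}(i))$ via two standard moves: since $X_{\mathrm{in}}(i)$ is contained in the conditioning set, submodularity gives $\Delta(x_i^* \mid \cdot) \leq \Delta(x_i^* \mid X_{\mathrm{in}}(i))$, and since $x_i^* \in X_i$, the greedy rule~\eqref{eq:seq_greedy} gives $\Delta(x_i^* \mid X_{\mathrm{in}}(i)) \leq \Delta(x_i \mid X_{\mathrm{in}}(i))$. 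This reduces the task to showing $\sum_{i=1}^n \Delta(x_i \mid X_{\mathrm{in}}(i)) \leq (n-\omega+1)\,f(x_1,\ldots,x_n)$.

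The crucial step is to split this sum into clique and non-clique contributions. For each $c_j \in C$, the clique property ensures $\{x_{c_1},\ldots,x_{c_{j-1}}\} \subseteq X_{\mathrm{in}}(c_j)$, so submodularity yields $\Delta(x_{c_j} \mid X_{\mathrm{in}}(c_j)) \leq \Delta(x_{c_j} \mid \{x_{c_1},\ldots,x_{c_{j-1}}\})$, and summing over $j$ telescopes into $f(x_{c_1},\ldots,x_{c_\omega})$, which is at most $f(x_1,\ldots,x_n)$ by monotonicity. For each of the $n-\omega$ non-clique agents $i \notin C$, the crude estimate $\Delta(x_i \mid X_{\mathrm{in}}(i)) \leq \Delta(x_i \mid \emptyset) = f(\{x_i\}) \leq f(x_1,\ldots,x_n)$ suffices. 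Combining the two contributions delivers $f(x_1^*,\ldots,x_n^*) \leq (n-\omega+2)\,f(x_1,\ldots,x_n)$, which is the claimed inequality.

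The main obstacle, and the reason Lemma~\ref{lemma:inductive-lemma} does not immediately transfer, is that the greedy optimality at agent $i$ holds with respect to $X_{\mathrm{in}}(i)$, while the natural telescoping of $f(x_1,\ldots,x_n)$ conditions on $\{x_1,\ldots,x_{i-1}\}$, and submodularity relates these two marginals in the unhelpful direction. The design above sidesteps this by organizing every bound around $X_{\mathrm{in}}(i)$ so that submodularity is only ever used to enlarge the conditioning set; the telescoping along a maximum clique then collapses $\omega$ of the $n$ overcount terms into one copy of $f(x_1,\ldots,x_n)$, which is precisely the mechanism by which the clique number enters the bound.
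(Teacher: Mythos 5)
Your proof is correct, and it takes a genuinely different route from the paper. The paper's proof partitions the agents into a maximum clique and its complement, invokes Theorem~\ref{theorem:main-full-one-half} on the clique to get $f(\subscr{S}{clique})\geq \tfrac{1}{2}f(\subscr{S}{clique}^*)$, uses the trivial $\tfrac{1}{n-\omega(\G)}$ bound on the remaining agents, and then combines the two via submodularity and monotonicity of $f$ to reach $f(S^*)\leq (n-\omega(\G)+2)f(S)$. You instead give a single self-contained argument: append the optimum to the greedy set, telescope to get $f(x_1^*,\ldots,x_n^*)\leq f(x_1,\ldots,x_n)+\sum_i \Delta(x_i^*\mid \cdot)$, push each overcount marginal down to $\Delta(x_i\mid X_{\mathrm{in}}(i))$ using diminishing returns plus the greedy rule, and then collapse the $\omega(\G)$ clique terms into a single $f(x_1,\ldots,x_n)$ by a second telescoping along the clique (using that a clique of a DAG is totally ordered, so $\{x_{c_1},\ldots,x_{c_{j-1}}\}\subseteq X_{\mathrm{in}}(c_j)$), while the $n-\omega(\G)$ remaining terms are each bounded by $f(\{x_i\})\leq f(x_1,\ldots,x_n)$. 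Each approach has its merits: the paper's is modular and reuses the complete-DAG theorem as a black box, whereas yours is arguably tighter as an argument, since all inequalities are organized around the actual in-neighborhoods $X_{\mathrm{in}}(i)$; in particular you never need to assert that the $\tfrac12$ guarantee still holds for the clique subproblem when clique agents additionally observe strategies of non-clique agents, a point the paper's appeal to Theorem~\ref{theorem:main-full-one-half} passes over quickly. Your derivation also makes explicit the mechanism by which $\omega(\G)$ enters the constant, namely that the clique lets $\omega(\G)$ of the $n$ overcount terms telescope into one copy of the greedy value.
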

\begin{proof}
Let us partition the nodes in $ \G $ into to the set of nodes in a maximum clique, whose strategy sets are denoted by $ \subscr{S}{clique} $ with its complement $ \subscr{\bar{S}}{clique} $. Consider now the smaller optimization problem over $ \subscr{S}{clique} $. By Theorem~\ref{theorem:main-full-one-half}, 
\[
f( \subscr{S}{clique})\geq \frac{1}{2}f( \subscr{S}{clique}^*),
\]
where we have again used the star notation to denote the optimal solution. On the other hand,  the remaining $ n - \omega(\G)$ agents cannot have a performance worst than $ \frac{1}{n-\omega(\G)} $ of their optimal, i.e., 
\[
f(\subscr{\bar{S}}{clique})\geq \frac{1}{n-\omega(\G)}f(\subscr{\bar{S}}{clique}^*).
\]
Using these two inequalities, we conclude that 
\begin{align*}
f( \subscr{S}{clique}^*)+f(\subscr{\bar{S}}{clique}^*)
&\leq 2f( \subscr{S}{clique})+(n-\omega(\G))f(\subscr{\bar{S}}{clique})\\
&\leq (2+(n-\omega(\G))f(S),
\end{align*}
where $ S=\subscr{S}{clique}\cup \subscr{\bar{S}}{clique} $, and the last inequality follows by monotonicity of $ f $. Using the submodularity of $ f $, we have that $ f(S^*) \leq f( \subscr{S}{clique}^*)+f(\subscr{\bar{S}}{clique}^*) $ and hence
\[
 f(S^*)\leq (2+(n-\omega(\G))f(S),
\]
concluding the proof.
\end{proof}

\begin{figure}[htb!]
\centering
	  \includegraphics[width=0.72\linewidth]{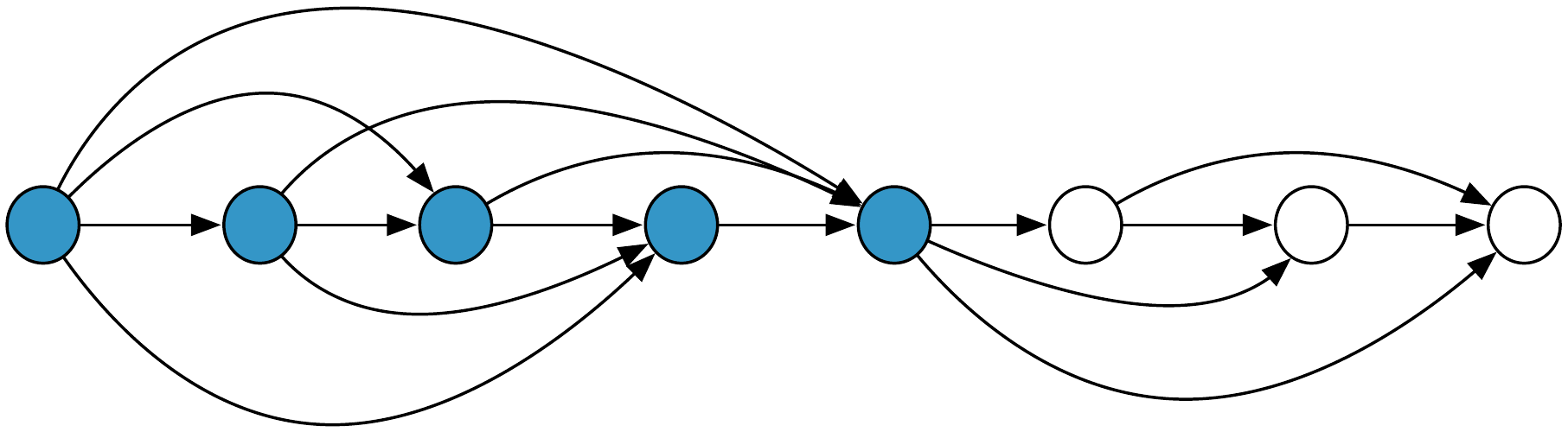}
	  \caption{A DAG with $ 8 $ nodes is depicted, where $ \omega(\G)=5 $. By Corollary~\ref{cor:clique_bound}, we conclude that under~\eqref{eq:seq_greedy}, we have $ f(x_1,\ldots, x_n)>\frac{1}{6}f(x^*_1,\ldots, x^*_n) $.}
	  \label{fig:clique-example}
\end{figure}

Figure~\ref{fig:clique-example} shows an example where Corollary~\ref{cor:clique_bound} is used to conclude the performance of~\eqref{eq:seq_greedy}. Note that if we take a complete DAG and delete a single edge, it's clique number reduces from $n$ to $n-1$, and from Corollary~\ref{cor:clique_bound}, our lower bound decreases from $1/2$ to $1/3$.  The reason for this is that we have not made any assumptions on the relative contribution of each player to the total reward.  The one agent that is removed from the clique may contribute more to the total reward than all $n-1$ other agents combined.  

\subsection{Interconnected Cliques of Full Information}

It is not clear how to characterize the impact of the availability or lack of information on the performance of local greedy algorithms for arbitrary graph topologies. In this sub-section, however, we extend Theorem~\ref{theorem:main-full-one-half} to a scenario with multiple cliques, where the agent's in each clique have access to the decision of the last agent in the last clique that takes decision prior to them.

\begin{theorem}\label{prop:tail-boradcasting-multiple-cliques}
Consider Problem~\ref{problem} and let $ (x_1^*,x_2^*, \ldots, x_{n}^*) $ be an optimal solution. Suppose that agents are partitioned into $ \kappa \in \integers_{\geq 1} $ cliques as
\[
\{1,\ldots, m_1\} , \{m_1+1,\ldots, m_2\}, \cdots \{m_{\kappa-1}+1,\ldots, n\},
\]
with the only information available from other cliques to the agent's in $ (i+1) $'s clique being the choice of agent $ m_i $, $ i\in \{1,\ldots, \kappa-1\} $. Suppose that players sequentially update their strategies $(x_1,\ldots,x_n)$ according to~\eqref{eq:seq_greedy}. Then 
\[
f(x_1,\ldots,x_n)\geq \frac{1}{2\kappa}(f(x^*_1,\ldots,x^*_n)+\sum_{i=1}^{\kappa-1}f(x^*_{m_i})).
\]
\end{theorem}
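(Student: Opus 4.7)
The plan is to leverage Theorem~\ref{theorem:main-full-one-half} clique by clique (via contraction onto the bridging strategy) and then aggregate the resulting bounds.

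\textbf{Step 1 (clique-wise application of Theorem~\ref{theorem:main-full-one-half}).} For the first clique ($j=1$) the theorem applies directly: the agents in $C_1=\{1,\ldots,m_1\}$ form a complete DAG with no incoming edges from outside, so denoting $S_1=\{x_1,\ldots,x_{m_1}\}$ and $S_1^*=\{x_1^*,\ldots,x_{m_1}^*\}$ one gets
\[
2\,f(S_1)\;\ge\;f(S_1^*).
\]
For each $j\ge 2$ I would introduce the contracted function $h_j:2^{X_{C_j}}\to\real_{\ge 0}$,
\[
h_j(A)\;=\;f(A\cup\{x_{m_{j-1}}\})-f(\{x_{m_{j-1}}\}),
\]
which is monotone, submodular and normalized. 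Because $\Delta_{h_j}(x|A)=\Delta_f(x|A\cup\{x_{m_{j-1}}\})$, the sequential greedy for $h_j$ on $C_j$ produces exactly the strategies $S_j=\{x_{m_{j-1}+1},\ldots,x_{m_j}\}$ that the actual algorithm picks inside clique $j$. Applying Theorem~\ref{theorem:main-full-one-half} to $h_j$ and unpacking gives, for every $j\in\{2,\ldots,\kappa\}$,
\[
2\,f(\{x_{m_{j-1}}\}\cup S_j)\;\ge\;f(\{x_{m_{j-1}}\}\cup S_j^*)+f(\{x_{m_{j-1}}\}),
\]
where $S_j^*=\{x^*_{m_{j-1}+1},\ldots,x^*_{m_j}\}$.

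\textbf{Step 2 (sum across cliques).} Adding the $\kappa$ inequalities from Step~1 and using monotonicity $f(\{x_{m_{j-1}}\}\cup S_j)\le f(x_1,\ldots,x_n)$ for each term on the left-hand side yields
\[
2\kappa\,f(x_1,\ldots,x_n)\;\ge\;f(S_1^*)+\sum_{j=2}^\kappa\!\bigl[f(\{x_{m_{j-1}}\}\cup S_j^*)+f(\{x_{m_{j-1}}\})\bigr].
\]

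\textbf{Step 3 (converting to the form of the theorem).} The remaining work is to transform the right-hand side into $f(x^*_1,\ldots,x^*_n)+\sum_{i=1}^{\kappa-1}f(x^*_{m_i})$. I would use two ingredients: first, subadditivity (a consequence of submodularity together with $f(\emptyset)=0$) applied to the sets $\hat S_j^*:=S_j^*\cup\{x^*_{m_{j-1}}\}$, which satisfy $\bigcup_j\hat S_j^*=S^*$ and therefore $\sum_j f(\hat S_j^*)\ge f(S^*)$; and second, the greedy-choice property of the last agent $m_i$ in each clique, which because $m_i$ has full information within its clique gives the singleton bound $f(\{x_{m_{i-1}}\}\cup S_i)\ge f(\{x^*_{m_i}\})$. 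Combined with a submodular comparison between $f(\{x_{m_{j-1}}\}\cup S_j^*)$ and $f(\hat S_j^*)$, these relations let us trade the $f(\{x_{m_{j-1}}\})$ terms on the right-hand side of Step~2 for the desired singleton contributions $f(x^*_{m_i})$, yielding the claimed bound.

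\textbf{Main obstacle.} The technical crux is entirely in Step~3: the contracted inequality naturally produces $f(\{x_{m_{j-1}}\})$ on the right-hand side, whereas the theorem requires the \emph{optimal} singleton $f(x^*_{m_{j-1}})$. A clean treatment requires simultaneously exploiting (a) submodularity applied to the pair $\{x_{m_{j-1}}\}$, $\{x^*_{m_{j-1}}\}\cup S_j^*$ to swap the actual bridge for the optimal bridge, and (b) the greedy inequality at agent $m_{j-1}$, which certifies that the value $f(\{x_{m_{j-2}}\}\cup S_{j-1})$ already dominates $f(x^*_{m_{j-1}})$ through the appropriate marginal-reward comparison. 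Once this bookkeeping is done, division by $2\kappa$ gives the stated bound.
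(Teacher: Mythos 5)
Your Steps 1 and 2 are correct: the contraction $h_j(A)=f(A\cup\{x_{m_{j-1}}\})-f(\{x_{m_{j-1}}\})$ is monotone, submodular and normalized, the greedy run inside clique $j$ is exactly a greedy run for $h_j$ (your conditioning on the realized bridge is in fact a careful way to account for the fact that agent $m_{j-1}$ is not a greedy first mover for the sub-problem), and Theorem~\ref{theorem:main-full-one-half} then gives $2f(\{x_{m_{j-1}}\}\cup S_j)\ge f(\{x_{m_{j-1}}\}\cup S_j^*)+f(\{x_{m_{j-1}}\})$, hence the Step-2 aggregate. The genuine gap is Step 3, which is a plan rather than a proof, and the plan does not go through at the constant $2\kappa$. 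First, the right-hand side of your Step-2 inequality does not dominate the target quantity $f(x_1^*,\ldots,x_n^*)+\sum_{i=1}^{\kappa-1}f(x^*_{m_i})$, so no manipulation of that right-hand side alone can finish the argument. Concretely: coverage of two unit-value items $A,B$; $\kappa=2$, clique one $=\{1,2\}$ with $m_1=2$, clique two $=\{3\}$; agent $1$ has a single strategy covering $A$, agent $2$ has strategies $p$ (covers $A$) and $q$ (covers nothing), agent $3$ has a single strategy covering $B$. The greedy run may pick $x_2=q$ (a tie), and then $f(S_1^*)+f(\{x_{m_1}\}\cup S_2^*)+f(x_{m_1})=1+1+0=2$, whereas $f(x_1^*,x_2^*,x_3^*)+f(x^*_{m_1})=2+1=3$. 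Second, the repair you sketch---using the true inequality $f(\{x_{m_{i-1}}\}\cup S_i)\ge f(x^*_{m_i})$ together with subadditivity and a submodular swap---necessarily re-engages the greedy value on the left, and the natural bookkeeping costs extra copies of $f(x_1,\ldots,x_n)$: for instance, bounding $f(\{x_{m_{j-1}}\}\cup S_j^*)\ge f(\{x^*_{m_{j-1}}\}\cup S_j^*)-f(x^*_{m_{j-1}})$ and then absorbing $2\sum_{i=1}^{\kappa-1} f(x^*_{m_i})\le 2(\kappa-1)f(x_1,\ldots,x_n)$ yields only a $1/(4\kappa-2)$ guarantee, not $1/(2\kappa)$. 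Showing how to trade the realized bridge for the optimal one without degrading the constant is exactly the content of the theorem, and it is missing.

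The paper avoids this swap altogether by inducting on $\kappa$: it applies the one-half bound to the last clique together with its bridging agent, but compared against the true optimum restricted to $\{m_{\kappa-1},\ldots,n\}$, so the optimal bridge $x^*_{m_{\kappa-1}}$ already sits inside the half-bound before any aggregation; it then adds the inductive estimate $2(\kappa-1)f(x_1,\ldots,x_{m_{\kappa-1}})\ge f(x^*_1,\ldots,x^*_{m_{\kappa-1}})+\sum_{i=1}^{\kappa-2}f(x^*_{m_i})$ and uses a single submodularity step, $f(x^*_1,\ldots,x^*_{m_{\kappa-1}})+f(x^*_{m_{\kappa-1}},\ldots,x^*_n)\ge f(x^*_1,\ldots,x^*_n)+f(x^*_{m_{\kappa-1}})$, to split off the singleton terms. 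The structural difference from your attempt is that the optimal bridge enters each clique's comparison before summing, whereas you aggregate first with the actual bridge and then try to swap afterwards---and, as the example above shows, that swap cannot be performed on the right-hand side alone.
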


\begin{proof}
We proceed by induction on $ \kappa $. The base case, where $ \kappa=1 $, follows from Theorem~\ref{theorem:main-full-one-half}. Suppose now that the statement holds for $ \kappa -1 $. We prove that it also holds for $ \kappa$. For notational convenience, let us define $m_{\kappa} := n$.  First, by Theorem~\ref{theorem:main-full-one-half},
\[
f(x_{m_{\kappa-1}},\ldots, x_{m_{\kappa}})\geq  \frac{1}{2}f(\tilde{x}^*_{m_{\kappa-1}},\ldots, \tilde{x}^*_{m_{\kappa}}),
\]
where $ (\tilde{x}^*_{m_{\kappa-1}},\ldots, \tilde{x}^*_{m_{\kappa}}) $ is the optimal strategy for clique $ \kappa $. Hence, we also have that
\[
f(x_{m_{\kappa-1}},\ldots, x_{m_{\kappa}})\geq  \frac{1}{2}f(x^*_{m_{\kappa-1}},\ldots,x^*_{m_{\kappa}}),
\]
where $(x^*_{m_{\kappa-1}},\ldots,x^*_{m_{\kappa}})$ is the strategy for clique $\kappa$ in the true optimal.
By using monotonicity,  we have that 
\[
2f(x_1,\ldots, x_n)\geq  f(x^*_{m_{\kappa-1}},\ldots,x^*_{m_{\kappa}}).
\]
On the other hand, by the induction assumption, along with monotonicity, we have that
\[
2(\kappa -1) f(x_1,\ldots,x_{m_{\kappa-1}})\geq f(x^*_1,\ldots,x^*_{m_{\kappa-1}})+\sum_{i=1}^{\kappa-2}f(x^*_{m_i}).
\]
Summing the last two inequalities, and noting that by monotonicity $f(x_1,\ldots,x_n) \geq f(x_1,\ldots,x_{m_{\kappa-1}})$ since $n = m_{\kappa}$, we have that
\begin{align*}
2\kappa f(x_1,\ldots,x_n)\geq& f(x^*_1,\ldots,x^*_{m_{\kappa-1}})\\
+ f(x^*_{m_{\kappa - 1}},\ldots,x^*_{m_{\kappa}})&+\sum_{i=1}^{\kappa-2}f(x^*_{m_i}).
\end{align*}
Using submodularity, we have that
\begin{align*}
2 \kappa f(x_1,\ldots,x_n)\geq& f(x^*_1,\ldots,x^*_{m_{\kappa}})\\
&+ f(x^*_{m_{\kappa-1}})+\sum_{i=1}^{\kappa-2}f(x^*_{m_i})\\
&=f(x^*_1,\ldots,x^*_n)+\sum_{i=1}^{\kappa-1}f(x^*_{m_i}),\\
\end{align*}
concluding the proof. 
\end{proof}

As noted in Remark~\ref{rem:synchronous}, the results of this section are extend to the case of synchronous updates.
As a final observation, consider the setting of Theorem~\ref{prop:tail-boradcasting-multiple-cliques} and suppose that there is no communications between agents in the cliques. Then, since
\begin{align*}
f(x_1,\ldots, x_{m_1})&\geq \frac{1}{2}f(x^*_1,\ldots, x^*_{m_1}), \quad \mathrm{and} \\
f(x_{m_i+1},\ldots, x_{m_{i+1}})&\geq \frac{1}{2}f(x^*_{m_i+1},\ldots, x^*_{m_{i+1}}),
\end{align*}
for all $ i \in \{1,\ldots, \kappa-1\} $, 
using submodularity and monotonicity, we conclude that $ f(x_1,\ldots, x_n) \geq \frac{1}{2\kappa}f(x^*_1,\ldots, x^*_n) $. In this sense, Theorem~\ref{prop:tail-boradcasting-multiple-cliques} captures a scenario where the overall performance can only be enhanced by having access to more information. In particular, the performance is guaranteed to improve by at least 
\[
\frac{1}{2\kappa} \sum_{i=1}^{\kappa-1}f(x^*_{m_i})\geq \frac{1}{2\kappa} f(x^*_{m_1},\ldots,x^*_{m_{\kappa-1}}) \geq 0,
\]
where the first inequality holds by submodularity.

\myclearpage
\section{Upper bounds using colouring}\label{section:upper}

Our main objective in this section is to determine upper bounds on the performance of the sequential greedy algorithm.  We proceed with studying two cases:  1) we construct a graph dependent submodular function that exploits information on graph topology to yield an upper bound; and 2) we design a global submodular function that provides bounds on all graph topologies. We will see that these two cases are essentially equivalent to computing an optimal and a greedy coloring~\cite{BK-JV:07} of the graph, respectively. The second case also allows us to extract graph properties that limit performance.  

\subsection{Background on Graph Coloring}

Given a graph $G = (V,E)$ with $|V| = n$, a \emph{coloring} is a function $c:V\to \{1,\ldots,n\}$ such that for every $(u,v) \in V$, we have $c(u) \neq c(v)$.  The \emph{vertex-coloring problem} is to find a coloring $c:V \to \{1,\ldots,k\}$ such that $k$ is minimized.  That is, the goal is to color the vertices of the graph with the minimum number of colors such that no pair of adjacent vertices have the same color.  The optimum value of $k$ is called the \emph{chromatic number} of the graph $G$ and is often denoted by $\chi(G)$.  The problems of computing a minimum coloring or equivalently of determining the chromatic number of a graph are NP-hard~\cite{BK-JV:07}.  

\subsection{Upper Bound via Adversarial Choice of Function}
\label{sec:adversarial}

Consider the following problem: Given a graph $\G = (V,E)$, can we construct a submodular function $f_{\G}$, exploiting information of the graph for which we can upper bound the performance of greedy algorithm?  We call such a function \emph{adversarial} since the function can be designed to exploit the weaknesses of the information graph.  Given a graph $\G = (V, E)$, we construct strategy sets and a submodular function as follows.  Let $c:V \to \{1,\ldots,k\}$ be an optimal coloring of $\G$ and where $\chi(\G) = k$ is the chromatic number of $\G$.   Given a coloring, we define the set of vertices of color $\ell \leq k$ as
\[
V_{\ell} = \{v\in V \;|\; c(v) = \ell \}.
\]
We use this coloring to construct strategy sets $X_i$ for each agent $i\in V$ and the submodular function $f: 2^X \to \real_{\geq 0}$.  For each agent $i$, we define 
$X_i = \{a_i, b_i\}$,
and let $X = \cup_i X_i$.  We define the function $f$ through its marginal rewards.  To that end, let $S \subset X$ be a set such that $|S\cap X_i| \leq 1$ containing decision for a subset of the $n$ agents in $V$.  Then, the marginal reward of $a_i$ is defined as
\[
\Delta(a_i | S) = 
\begin{cases}
	0 & \text{if $a_j \in S$ for some $j\in V_{c(i)}$}, \\
	1 & \text{otherwise}.
\end{cases}
\]
The marginal reward of $b_i$ is defined is $\Delta(b_i|S) = 1$, for all $i\in V$.  Then, given a set $S = \{s_1,\ldots,s_m\} \subset X$ with $|S\cap X_i| \leq 1$, for all $ i \in V $, we define
\begin{equation}
	\label{eq:adversarial_submod}
f(S) = \sum_{i=1}^m \Delta\big(s_i|\{s_1,\ldots,s_{i-1}\}\big).
\end{equation}

	Note that we can equivalently define this function over non-disjoint strategy sets.  Let
	$X = \{a,b_1,\ldots,b_n\}$ and define $X_i = \{a,b_i\}$.  Then given a set $S \subset X$, we define $f(S) = |S|$. Notice that this function is modular.  Given a tuple containing agent strategies $(x_1,\ldots,x_n)$, we define the set of unique strategies as $S = \{x_1,\ldots,x_n\}$ and evaluate $f(S)$ as the reward.  This definition is equivalent to the disjoint definition above as discussed in Section~\ref{sec:disjoint}. 

Using the function~\eqref{eq:adversarial_submod}, we obtain the following result.
\begin{proposition}
	\label{prop:coloring_upper}	
	Consider Problem~\ref{problem}, where $f$ is given by~\eqref{eq:adversarial_submod},
	and let $ (x_1^*,x_2^*, \ldots, x_{n}^*) $ be an optimal solution. Suppose that players sequentially update their strategies $(x_1,\ldots,x_n)$ according to~\eqref{eq:seq_greedy}.  Then, 
	\[
	f(x_1,\ldots,x_n) \leq \frac{\chi(G)}{n}f(x_1^*,\ldots,x_n^*).
	\]
\end{proposition}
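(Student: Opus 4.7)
The plan is to exhibit both sides of the ratio explicitly: to show that the greedy algorithm on this adversarial instance can be forced to achieve value at most $\chi(\G)$, while a particular feasible profile achieves value at least $n$, from which the stated bound follows.

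First I would unpack the coloring. Since $c$ is a proper coloring of $\G$, any two distinct vertices $i, j$ in the same color class $V_{c(i)}$ are non-adjacent in $\G$, hence $j \notin \mathcal{N}(i)$. It follows that when agent $i$ is about to choose, its observed set $X_{\mathrm{in}}(i) = \{x_j : j \in \mathcal{N}(i)\}$ contains no strategy played by any other agent of color $c(i)$. By the definition of the marginal reward, $\Delta(a_i \mid X_{\mathrm{in}}(i)) = 1$, and $\Delta(b_i \mid X_{\mathrm{in}}(i)) = 1$ as well. The greedy rule~\eqref{eq:seq_greedy} is therefore indifferent between $a_i$ and $b_i$, and since ties are broken arbitrarily I am free to select the tie-breaking rule that sets $x_i = a_i$ for every $i \in V$, yielding the greedy profile $(a_1, \ldots, a_n)$.

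Next I would evaluate the actual value $f(a_1, \ldots, a_n)$ via the telescoping identity~\eqref{eq:submod_value}. For the full prefix $\{x_1, \ldots, x_{i-1}\}$, one has $\Delta(a_i \mid \{x_1, \ldots, x_{i-1}\}) = 1$ precisely when no earlier agent in the same color class has been selected, i.e.\ when $i$ is the smallest index in $V_{c(i)}$, and $0$ otherwise. Summing over $i$ yields exactly one unit per color class, so $f(a_1, \ldots, a_n) = \chi(\G)$. For the lower bound on the optimum, I would exhibit the feasible profile in which every agent selects $b_i$: using~\eqref{eq:submod_value} with $\Delta(b_i \mid \cdot) \equiv 1$ gives $f(b_1, \ldots, b_n) = n$, hence $f(x_1^*, \ldots, x_n^*) \geq n$.

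Combining the two bounds yields the claim. The only delicate step is the tie-breaking argument: I must be careful that the tie is genuine from agent $i$'s local viewpoint, which is exactly what the non-adjacency of same-colored vertices in a proper coloring provides. A secondary check, not strictly needed since the proposition presumes $f$ is a valid submodular objective, is to verify that~\eqref{eq:adversarial_submod} defines a normalized, monotone, submodular function; this is routine because $\Delta(a_i \mid S)$ is monotone non-increasing in $S$ (once an $a_j$ of the same color enters $S$, the marginal drops to $0$ and stays there) and $\Delta(b_i \mid S)$ is constant, so the ordering in~\eqref{eq:adversarial_submod} is immaterial and diminishing returns holds.
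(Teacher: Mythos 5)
Your proof is correct and follows essentially the same route as the paper's: exploit the properness of the coloring to show each agent sees no same-colored in-neighbor, hence is indifferent between $a_i$ and $b_i$, adversarially break ties toward $a_i$ to get a greedy value of exactly $\chi(\G)$, and compare against the profile $(b_1,\ldots,b_n)$ of value $n$. Your added remarks on the telescoping evaluation and the submodularity sanity check are fine but do not change the argument.
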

\begin{proof}
An optimal solution for maximizing~\eqref{eq:adversarial_submod} is $x_i^* = b_i$ for each $i\in V$, which yields a reward of $f(x_1^*,\ldots,x_n^*) = n$.  Now, considering the  greedy algorithm in~\eqref{eq:seq_greedy}, we see that for each agent $i\in V$, we have $\Nn(i) \cap V_{c(i)} = \emptyset$, and thus no agent in $V_{c(i)}$ is visible to agent $i$.  Thus, for each agent $i$ we have
\[
\argmax_{x\in X_i}\Delta\big(x\;|\; X_{\mathrm{in}}(i)\big) = \{a_i, b_i\},
\]
both of which achieve a marginal reward of $1$.  A possible evolution of the greedy algorithm is $x_i = a_i$ for each $i\in V$.  The value of this solution is
\[
f(a_1,\ldots,a_n) = \sum_{i=1}^n \Delta(a_i \; | \; a_1,\ldots,a_{i-1}) = \chi(\G)
\]
since only the first agent of each color contributes $1$ to the total reward, and all others contribute zero.
\end{proof}

Figure~\ref{fig:chromatic-example} depicts the DAG used in Figure~\ref{fig:clique-example} along with an optimal coloring yielding a chromatic number of $ 5 $. Using Proposition~\ref{prop:coloring_upper}, we conclude that under~\eqref{eq:seq_greedy}, we have $ f(x_1,\ldots, x_n)<\frac{5}{8}f(x^*_1,\ldots, x^*_n) $. 

\begin{figure}
\centering
	  \includegraphics[width=0.72\linewidth]{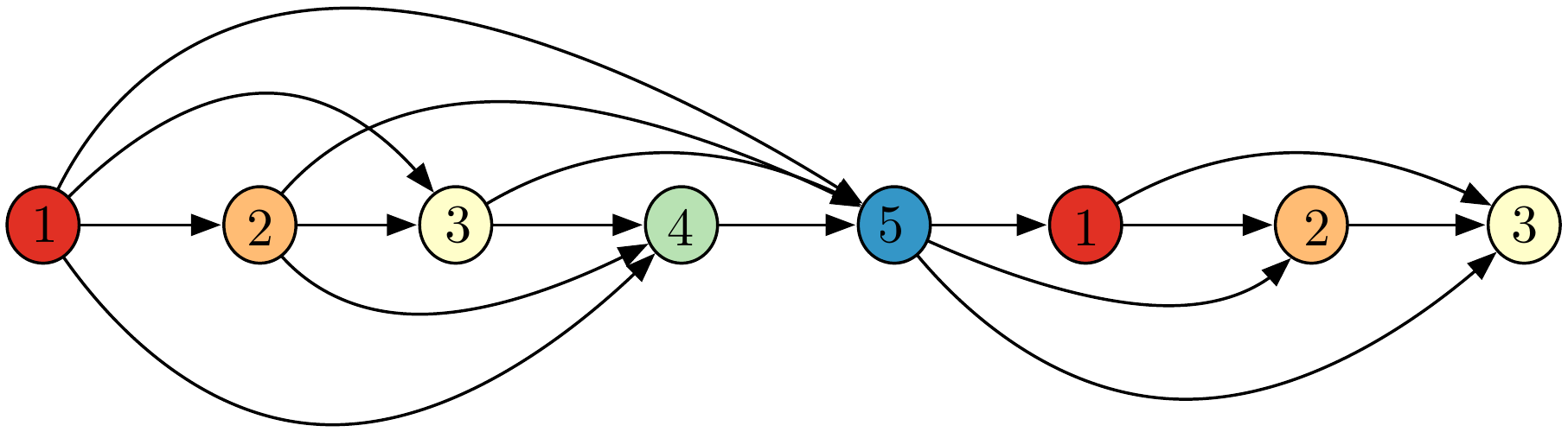}
	  \caption{A minimum coloring of the graph of Figure~\ref{fig:clique-example} using five colors. Using the lower bounds on the performance and Proposition~\ref{prop:coloring_upper}, we can conclude that $  \frac{1}{6}f(x^*_1,\ldots, x^*_n) <f(x_1,\ldots, x_n)<\frac{5}{8}f(x^*_1,\ldots, x^*_n)  $.}
	  \label{fig:chromatic-example}
\end{figure}

\subsection{Upper Bound via Universal Function}
 \label{sec:upper_univ}
 
The upper bound in the previous section required a submodular function $f$ that depended on the graph.  In the following, we propose a single function, independent of the graph topology, and analyze its performance for all graphs.    This allows us to state a few simple properties of graphs that limit its performance.  

The base set of the function is $X = \{e_1,\ldots,e_m\}$ where $m \geq n$ and we let $X_i = X$ for each agent $i$.  Given a choice $x_i \in X_i$ for each agent $i$, the value of the submodular function is
\begin{equation}
	\label{eq:universal_fn}
f(x_1,\ldots,x_n) = |\cup_i\{x_i\}|.
\end{equation}
That is, the value is given by the number of unique elements of $X$ chosen by the agents.  Clearly, an optimal solution is any one in which each agent chooses a different element from $X$, yielding a value of $n$.  Under the greedy algorithm, agent $i$ will compute its marginal reward as
\[
\Delta(e_j \;|\; X_{\mathrm{in}}(i)) = 
\begin{cases}
	1 & \text{if $e_j \notin X_{\mathrm{in}}(i)$} \\
	0 & \text{if $e_j \in X_{\mathrm{in}}(i)$}, \\
\end{cases}
\]
and thus agent $i$ will choose any strategy $e_j$ such that $e_j \notin X_{\mathrm{in}}(i)$. Suppose that each agent $i$ breaks ties by choosing the strategy with lowest index.  Then, we can write the greedy choice for agent $i$ as
\begin{equation}
	\label{eq:greedy_for_univ}
x_i = \min\{e_j \in X \;|\; e_j \notin X_{\mathrm{in}}(i) \}. 
\end{equation}

Next we relate the performance of the greedy algorithm on this function to properties of the underlying graph.  

\begin{proposition}
Consider the submodular function in~\eqref{eq:universal_fn} and any graph $\G = (V,E)$. If the greedy algorithm finds a solution within $k/n$ of the optimal for some $k > 0$ then each of the following properties hold:
\begin{enumerate}
	\item there is a vertex in $\G$ with in-degree of at least $k-1$;
	\item for each $\ell\in \{1,\ldots,k\}$  there are at least $\ell$ agents with in-degree of at least $k-\ell$;
	\item the number of edges in $\G$ is at least $k(k-1)/2$.
\end{enumerate}
\end{proposition}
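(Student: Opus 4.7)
The plan is to exploit the very rigid structure that the greedy rule \eqref{eq:greedy_for_univ} forces on the chosen strategies. The key observation is this: if agent $i$ selects $e_j$, then by the lowest-index tie-breaking rule each of $e_1, \ldots, e_{j-1}$ must already lie in $X_{\mathrm{in}}(i)$. Since $X_{\mathrm{in}}(i)$ contains one element per in-neighbor, accommodating $j-1$ \emph{distinct} elements requires at least $j-1$ distinct in-neighbors; hence the in-degree of $i$ is at least $j-1$. A short induction on agent index then shows that at every stage the set of distinct elements selected so far is a prefix $\{e_1, \ldots, e_K\}$ of $X$: each $e_r$ with $r<j$ must itself have been chosen earlier, by some in-neighbor of the agent currently playing $e_j$.

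Assuming the greedy algorithm attains a value of at least $k$ (which is what ``within $k/n$ of the optimal value $n$'' means, since here the optimum is $n$), the prefix property guarantees that every element $e_1, \ldots, e_k$ is selected. For each $j \in \{1, \ldots, k\}$, let $a_j$ denote the first agent who picks $e_j$. These $k$ agents play pairwise distinct strategies, so they are themselves pairwise distinct, and by the main observation above each $a_j$ has in-degree at least $j-1$.

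From this the three conclusions are immediate: (i) $a_k$ witnesses a vertex of in-degree at least $k-1$; (ii) the agents $a_{k-\ell+1}, a_{k-\ell+2}, \ldots, a_k$ are $\ell$ distinct agents whose in-degrees are at least $k-\ell, k-\ell+1, \ldots, k-1$, respectively, and in particular all at least $k-\ell$; and (iii) summing the in-degree lower bounds over $a_1, \ldots, a_k$ gives $\sum_{j=1}^{k}(j-1) = k(k-1)/2$, which is at most the sum of in-degrees over all vertices of $\G$, i.e., $|E|$. The only subtle point is the opening observation: one must be careful not to conflate in-neighbors with the elements they contribute, since two in-neighbors may pick the same element; the bound $j-1 \leq \mathrm{indeg}(i)$ comes precisely from the need for $j-1$ distinct elements to appear in $X_{\mathrm{in}}(i)$. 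Once this distinction is handled, the remainder is bookkeeping.
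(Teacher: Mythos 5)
Your proposal is correct and follows essentially the same route as the paper: the observation that an agent choosing $e_j$ under~\eqref{eq:greedy_for_univ} must have $\{e_1,\ldots,e_{j-1}\}\subseteq X_{\mathrm{in}}(i)$ and hence in-degree at least $j-1$, combined with the fact that reaching value $k$ forces all of $e_1,\ldots,e_k$ to be chosen, after which (i)--(iii) follow by counting. Your write-up is merely more explicit about the prefix property and the in-neighbor/element distinction (and it states the sum $\sum_{j=1}^{k}(j-1)=k(k-1)/2$ correctly, where the paper's proof has a small typographical slip).
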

\begin{proof}
By Equation~\eqref{eq:greedy_for_univ}, if $x_i = e_j$,  then $\{e_1,\ldots,e_{j-1}\} \subseteq X_{\mathrm{in}}(i)$.  From this we immediately see that for an agent to choose $e_k$, it must have an in-degree of at least $k-1$, proving property (i).  Property (ii) also follows by noticing that each agent that chooses a strategy with index $\geq j$ must have in-degree of at least $j-1$.  To achieve a value of $k$, each strategy from $e_1$ to $e_k$ must be chosen by an agent.
	
To see property (iii), we notice that the sum of in-degrees of all agents must be at least $\sum_{j=1}^k j = k(k-1)/2$. 
\end{proof}

Notice that the sequence of strategy choices in~\eqref{eq:greedy_for_univ} provides a simple and efficient algorithm for computing a performance upper bound for a given graph.  For completeness, we give the details in Algorithm~\ref{alg:upper_bound}.
\begin{algorithm}
	\caption{\textbf{Greedy Algorithm Upper Bound}}
	\label{alg:upper_bound}
\begin{algorithmic}[1]
  \REQUIRE A directed acyclic graph $\G = (V,E)$.
  \ENSURE An upper bound on the approximation ratio of greedy algorithm on $\G$.
  \STATE Topologically sort the vertices $V$
  \FOR{each $v\in V$}
  	\STATE Set $\texttt{value}[v] = 1$ 
  \ENDFOR
  \FOR{each $v\in V$ in topological order} \label{step:for_loop}
  	\STATE Set $\texttt{value}[v]$ to smallest integer $k$ such that for each $u \in \Nn(i)$, $\texttt{value}[u] \neq k$.  \label{step:smallest_int}
  \ENDFOR
  \RETURN $\frac{1}{|V|}\max_{v\in V} \texttt{value}[v]$
\end{algorithmic}
\end{algorithm}

\emph{Complexity of Algorithm~\ref{alg:upper_bound}:}  The complexity of Algorithm~\ref{alg:upper_bound} is $O(|V| + |E|)$.  The topological sort can be performed in $O(|V| + |E|)$ time.  The only detail is to implement line~\algostep{\ref{step:smallest_int}} in linear time, which essentially computes the smallest element not in an array.  This can be done using two passes through the array of in-neighbor values.  In the first pass, we populate a Boolean array of length $|\Nn(i)|$.  All entries of the array are initialized to \texttt{false}, and the $j$th entry is then reset to \texttt{true}  if and only if there is a vertex $u\in\Nn(i)$ with $\texttt{value}[u] = j$.  In the second pass, we scan the Boolean array for the first false entry.  This is the smallest value that is not chosen by an in-neighbor.  Thus, the total computation time for the for-loop in line~\algostep{\ref{step:for_loop}} is $O(|E|)$.

\subsection{Gap Between Adversarial and Universal Upper Bounds}

The strategies chosen by each agent for the submodular function in~\eqref{eq:universal_fn} provides a coloring of the graph $\G$.  That is,  the vertices $V_{\ell}:= \{i\in V\; | \; x_i = e_{\ell} \}$ are those colored with color $\ell$.  By construction, there are no edges between vertices of the same color.  This implies that the adversarial upper bound is tighter than the universal upper bound, which utilizes the minimum number of colors.  

A key question is how large the gap can be between the two upper bounds.  In general it can be arbitrarily large.  To see this, consider the following bipartite directed graph $\G = (V,E)$ consisting of $n = 2m$ vertices, where 
\[
V = \{u_1,\ldots,u_m\} \cup \{w_1,\ldots,w_m\}. 
\]
The graph contains all edges $(u_i,w_j)$ such that $i < j$ and all edges $(w_i, u_j)$ such that $i < j$ along with the edge $(u_m, w_m)$.  An example is shown in Figure~\ref{fig:2-color}.  We fix the topological ordering of the vertices to be $u_1,w_1,u_2,w_2,\ldots,u_m,w_m$.
\begin{figure}
	\includegraphics[width=0.27\linewidth]{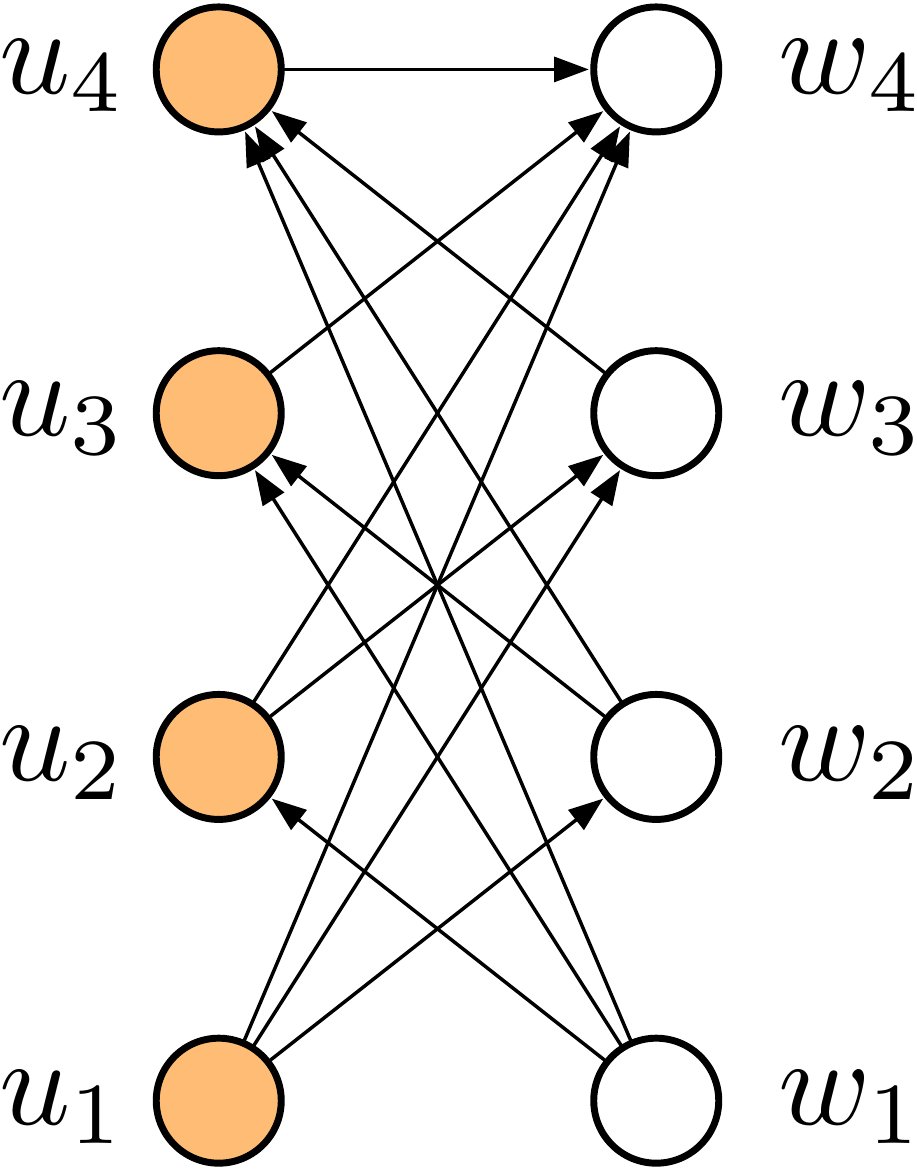}
	\centering
	\caption{A bipartite graph on eight vertices.  The graph is colored using just two colors, and its chromatic number is two. Under~\eqref{eq:greedy_for_univ}, vertex $w_4$ will select strategy $e_5$, and thus Algorithm~\ref{alg:upper_bound} will output $5/8$.}
	\label{fig:2-color}
\end{figure}
 For this graph, under Algorithm~\ref{alg:upper_bound}, the vertex $w_m$ chooses strategy $e_{m+1}$.  Thus
	\begin{enumerate}
		\item the chromatic number of $\G$ is 2, and thus the adversarial upper bound is $2/n$; and
		\item Algorithm~\ref{alg:upper_bound} returns $1/2 + 1/(2m)$.  
	\end{enumerate}

However, the advantage of Algorithm 1 is that it is a simple linear-time algorithm for computing an upper bound on achievable performance of a given graph topology.  In contrast, the adversarial function of Section~\ref{sec:adversarial} provides a much tighter upper bound, but requires solving an NP-hard problem.  An interesting connection to note is that Algorithm~\ref{alg:upper_bound} is essentially computing a greedy coloring of $\G$, where vertices are considered in topological order.  It is known that in general a greedy coloring does not provide a constant factor approximation to the minimum coloring~\cite{BK-JV:07}.

\begin{remark}[Gap between bounds for random graphs]
Given a probability $p \in[0,1]$ and a set $V$ of $n$ vertices, the Erd{\"o}s-R\'{e}nyi graph $\G(n,p)$ on $V$ is a random graph in which each edge $(i,j)$, where $i,j\in V$ and $i\neq j$ is included with probability $p$. From~\cite{bollobas1976cliques}, the chromatic number of almost every $\G(n,p)$ is 
\[
\left(\frac{1}{2} + o(1)\right)\log(1/(1-p))\frac{n}{\log n}.
\]
In addition, for almost every graph $\G(n,p)$, the greedy coloring uses at most twice as many colors~\cite{bollobas1976cliques}. Consider now random directed acyclic graphs generated as follows: we first generate an Erd{\"o}s-R\'{e}nyi graph, and then we use any correct method to assign directions to each edge such that the resulting directed graph becomes acyclic. Then, using the result above on the chromatic number of the underlying Erd{\"o}s-R\'{e}nyi graph,  Algorithm~\ref{alg:upper_bound} provides
\[
\left(\frac{1}{2} + o(1)\right)\log(1/(1-p))\frac{1}{\log n}
\]
as an upper bound on the performance.  In addition, for almost every graph $\G(n,p)$, the greedy coloring uses at most twice as many colors~\cite{bollobas1976cliques}, implying that for such graphs the gap between the two upper bounds is at most two.
\oprocend
\end{remark}

\subsection{Comparison of Lower and Upper Bounds}

In Section~\ref{sec:distributedgreedy}, we provided lower bounds on the performance of greedy algorithm~\eqref{eq:asyn-greedy-obs} for any monotone, normalized, submodular function.  This consisted of results for several graph topologies, and a general result based on the clique number of the graph.  Table~\ref{tab:lower_upper} compares these lower bounds with the upper bounds obtained from Proposition~\ref{prop:coloring_upper} and Algorithm~\ref{alg:upper_bound}.  Given a graph topology, the lower bound provides a minimum performance guarantee for all submodular functions.  In constrast, the upper bounds provide limitations on performance for a specific (worst-case) submodular function.  
\begin{table}[H]
	\centering
	\ra{1.2}
	\begin{tabular}{@{} l ccc @{}}
		\toprule
		Graph & Lower & $\chi(\G)$ Upper & Alg.~\ref{alg:upper_bound} Upper \\
		\midrule 
		Empty & $1/n$ & $1/n$ & $1/n$ \\
		Complete acyclic & $1/2$ & 1 & 1 \\
		Interconnected cliques & $1/(2\kappa)$ & $1/\kappa + 1/n$ & $1/\kappa + 1/n$ \\
		General DAG & $\frac{1}{n - \omega(\G) + 2}$ & $\chi(\G)/n$ &  Alg.~\ref{alg:upper_bound} \\
		\bottomrule
	\end{tabular}
	\caption{Comparison between the lower bound in Section~\ref{sec:distributedgreedy} and the two upper bounds for three graph topologies. }
	\label{tab:lower_upper}
\end{table}
The entires in Table~\ref{tab:lower_upper} for the interconnected cliques graph assume $\kappa$ cliques, each of size $n/\kappa$.

\section{Simulation Results}\label{section:coverage}
In this section we explore the relation between the lower bound, the two upper bounds, and the true performance of the greedy algorithm on several types of randomly generated graphs. The classes of random graphs chosen here are Erd{\"o}s-R\'{e}nyi, Barab{\'a}si-Albert preferential attachment, and Watts- Strogatz model graphs, which are commonly used to model social networks~\cite{NB-GB-SZ:16}, as well as physical networks~\cite{soltan2015analysis}. Given a graph $\G$, we compute the chromatic number $\chi(\G)$ and the and clique number $\omega(\G)$ using their standard integer programming formulations~\cite{pattillo2011clique}, and solve each integer program using Gurobi~\cite{gurobi}.
\begin{figure}[htb!]
	\centering
	\includegraphics[width=\linewidth]{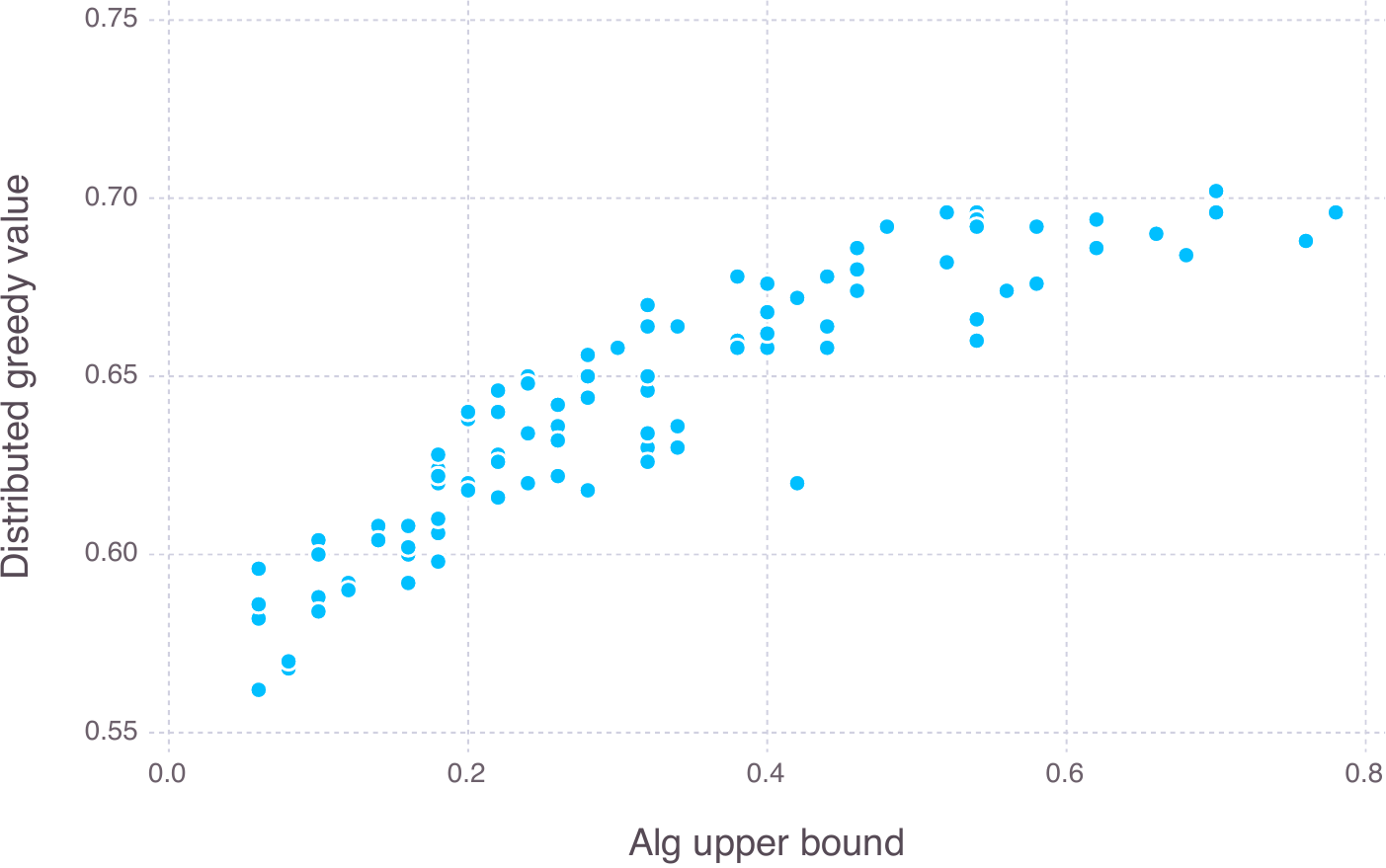}
	\caption{Comparison between the upper bound from Algorithm 1 and the performance on the distributed greedy algorithm over $100$ random graphs. The value of the upper bound correlates strongly with the true performance of the distributed greedy algorithm (Spearman's rank correlation coefficient of $0.92$).}
	\label{fig:distributed_upper}
\end{figure}

First, we explored the relation between the true performance of the distributed greedy algorithm and the upper bound given by Algorithm~\ref{alg:upper_bound}.  We generated a submodular function using the disk coverage problem illustrated in Figure~\ref{fig:disk_selection} with 50 agents.  Each agent has the choice of three disks each of radius $r=0.07$, uniformly randomly distributed in the unit square.  We then generated 100 random graphs describing the information structure between the agents.  Each graph is generated as follows.  We select a probability $p$ uniformly randomly from $[0,1]$, generate a directed Erdos-Renyi graph with probability $p$, fix a random ordering of the vertices, and delete all edges that are directed to an earlier vertex in the ordering, creating a random DAG.  Figure~\ref{fig:distributed_upper} displays the results on these 100 graphs.  The upper bound, which is an approximation ratio relative to the global optimal, and the value obtained by the distributed greedy algorithm, which is the fraction of the unit square covered by the selected disks, are highly correlated.

Next, Figure~\ref{fig:preferential_attachment} compares the lower bound to the two upper bounds for graphs generated according to the Barab{\'a}si-Albert preferential attachment model~\cite{barabasi1999emergence} a function of the number of vertices in the graph (i.e., the number of agents in the network).  Each graph is created by starting with a complete graph on five vertices, and then adding vertices using preferential attachment.  An acyclic graph is then created by generating a random ordering of the vertices and deleting cyclic edges as described above.  Each data point shows the mean of 20 preferential attachment graphs, where error bars give the standard deviation.
\begin{figure}[htb!]
	\centering
	\includegraphics[width=\linewidth]{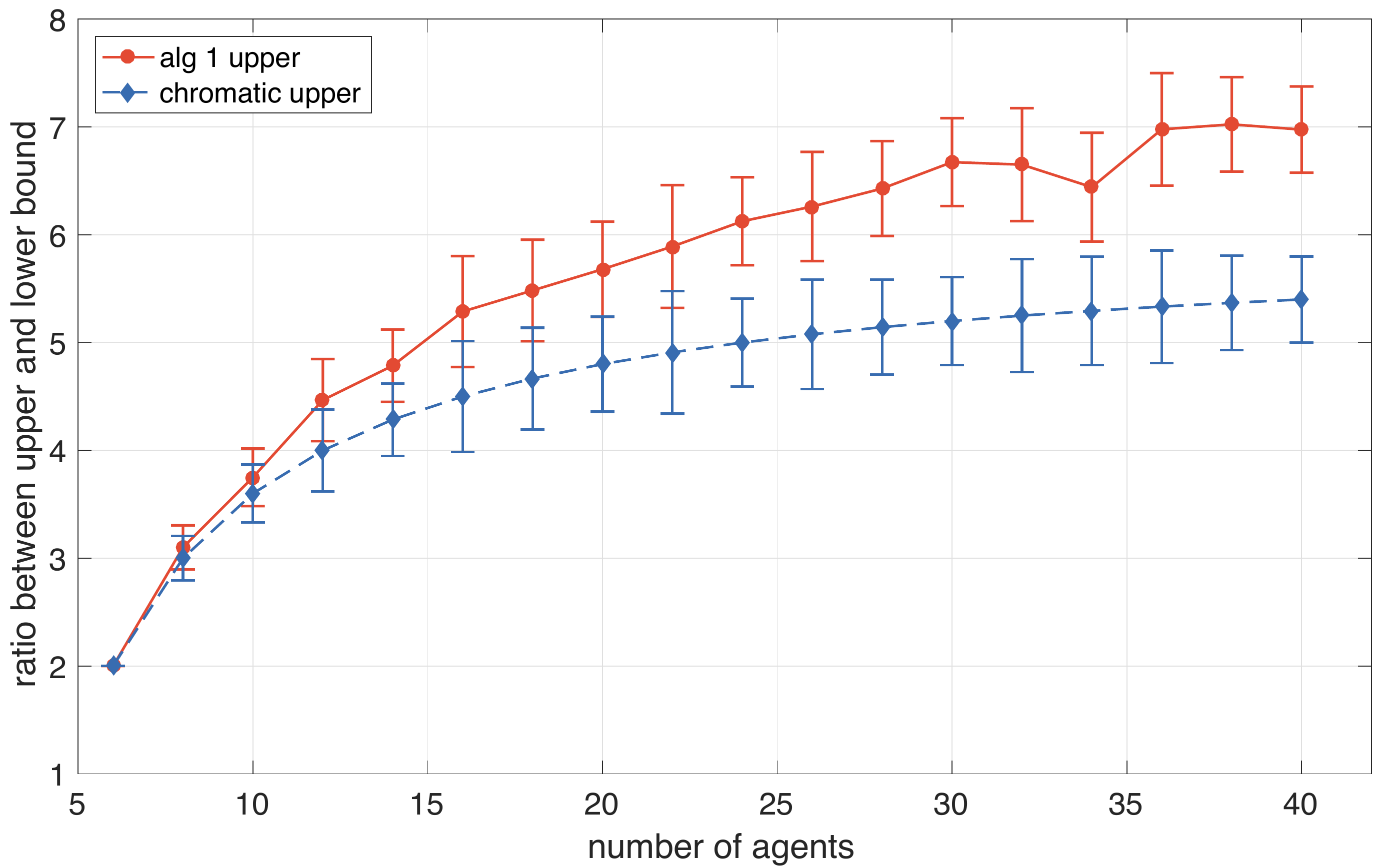}
	\caption{Ratio between upper and lower bounds as a function of the number of agents for networks formed via preferential attachment.  
	Each graph is grown from a complete directed graph on 5 nodes, and each new node is connected to five neighbors using the preferential attachment rule.  Each data point shows the mean of 20 trials, and error bars give the standard deviation.}
	\label{fig:preferential_attachment}
\end{figure}
The figure gives insight into the gap between lower and upper bounds, and the growth of this gap with the number of agents.  Note, this plot shows the dependence of the bounds on the information structure, and holds for all functions $f$.

Finally, Figure~\ref{fig:small-world} compares the lower bound to the two upper bounds for small-world graphs created according to the Watts-Strogatz model~\cite{watts1998collective}.  In this experiment, each graph begins as a ring lattice over 25 vertices, where each vertex is connected to $K$ neighbors on each side.  The value of $K$ is plotted on the horizontal axis in Figure~\ref{fig:small-world}.  Then, each edge is rewired with a probability $\beta = 0.25$, by shifting one of its endpoints to a new vertex at random, while avoiding self-loops and duplication of edges.  A DAG is then created by fixing a random ordering of the vertices, and removing acyclic edges, as described above.
\begin{figure}[htb!]
	\centering
	\includegraphics[width=\linewidth]{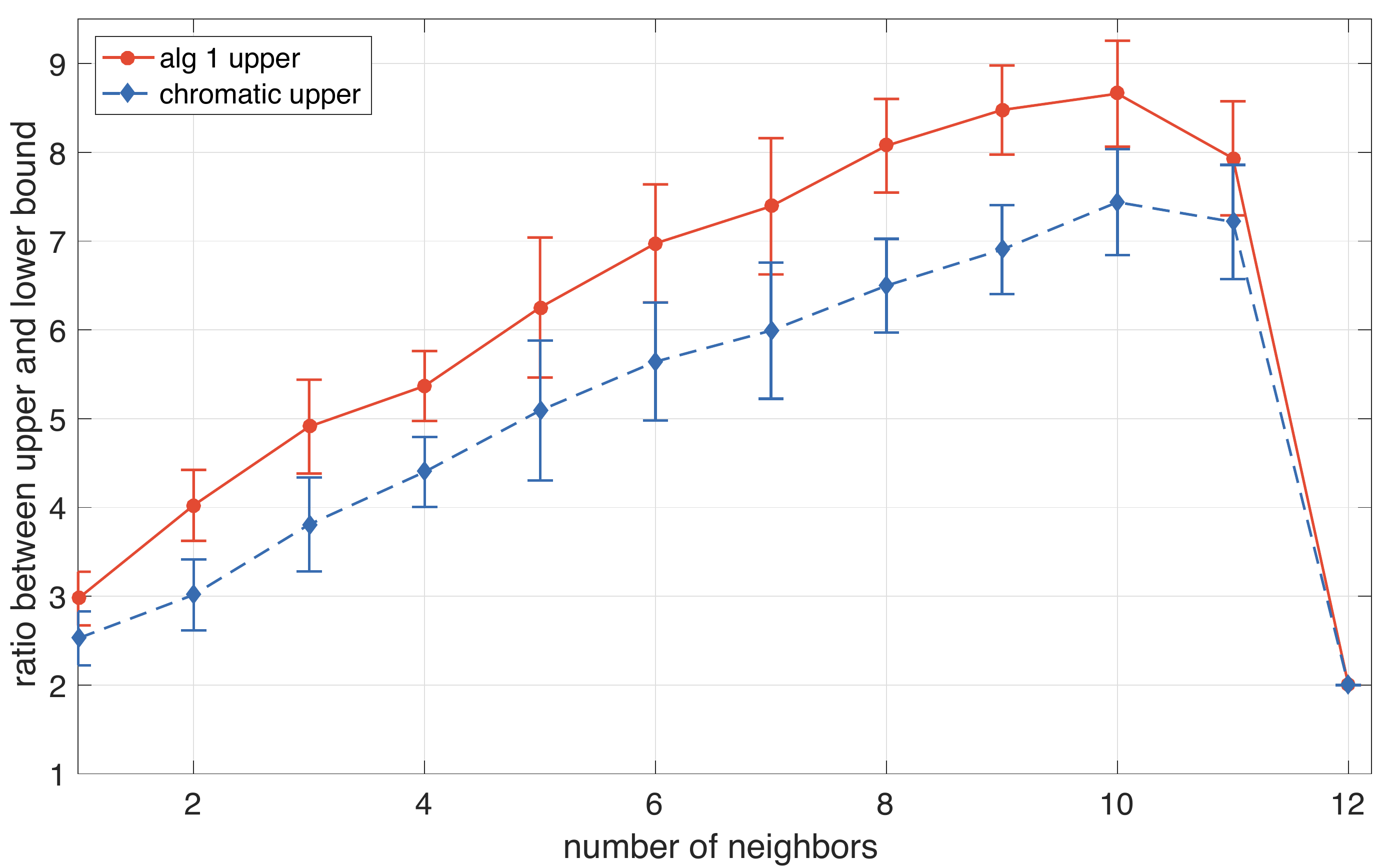}
	\caption{Ratio between upper and lower bounds as a function of the number of neighbors for each agent prior to the rewiring operation.  Each graph contains $25$ agents, and the rewiring probability $\beta = 0.25$.  Each data point shows the mean of $20$ trials, and error bars give the standard deviation.}
	\label{fig:small-world}
\end{figure}

Each data point shows the mean of 20 small-world graphs, where error bars give the standard deviation.  As the number of neighbors $K$ increases, the graphs become more connected.  When $K=12$, each vertex is connected to all others, and thus the resulting DAG is complete.  Of note in this figure is the relatively small gap between the two upper bounds, which gives support to the use of Algorithm~\ref{alg:upper_bound} as an efficient approximation to the chromatic number bound.  Also of note is the large change in the ratio of the bounds as $K$ is increased from 11 to 12.  Recall that for a complete DAG (i.e., $K=12$), the lower bound is $1/2$, while the upper bounds are both $1$.  On the other hand, as discussed immediately after Corollary~\ref{cor:clique_bound}, removing even just a single edge from a complete DAG has a large impact on the lower bound, which is observed in the large change in the ratio.

\section{Conclusions and future work}
We have studied a class of submodular maximization problems under matroid constraint, where the strategy set is partitioned into private strategy sets, assigned to a group of agents; in this sense, the maximization task is distributed among them. The agents take actions sequentially and have limited information about the actions taken by agents prior to them. We have investigated the limitations that the lack of information about the actions of other agents can impose by characterizing upper and lower bounds  on the performance of local greedy algorithms. Our lower bound  depend on the clique number of a graph that captures the information structure, whereas our upper bounds are given in terms of the chromatic number of the underlying graph. Studying scenarios where agents are allowed to modify their choices in multiple rounds, obtaining tighter upper bounds for more homogeneous classes of submodular functions, investigating under what additional assumptions on the submodular functions one can generalize the lower bounds provided, and exploring other applications of distributed submodular optimization are interesting avenues of future research.

 \section*{Acknowledgments}
The authors wish to thank Mohd. Shabbir Ali and Professor Jason Marden for their useful comments on an earlier version of this paper. We also thank the associate editor and the reviewers for their helpful comments.

\end{document}